\tikzset{->,
    shorten >=1pt, auto, node distance=1.5cm, on grid, initial text=,
    every state/.style={minimum size=20pt,inner sep=0pt},
    every node/.style={font=\normalsize}}
\tikzstyle{mytriangle}=[isosceles triangle, shape border rotate=-180, fill=gray!25, xshift=1cm]
\theoremstyle{definition}
\newtheorem{axiom}{Axiom}
\newcommand{\iosteps}[3]{#1 \overset{#2}{\rightsquigarrow} #3}
\newcommand{\functor}[1]{#1 \xrightarrow{\mathit{mon}} #1}
\newcommand{\tracerel}[2]{\mathcal{P}(#1^\omega \times #2^\omega)}
\newcommand{\quadruple}[4]{#1 \vdash {#2}^\forall \mid {#3}^\exists \ \{ \ #4 \ \}}
\newcommand{\iquadruple}[4]{#1 \vdash_\texttt{inv} {#2}^\forall \mid {#3}^\exists \ \{ \ #4 \ \}}
\title{Coinductive Proofs for Temporal Hyperliveness}
\begin{document}

  \author{Arthur Correnson}
  \orcid{0000-0003-2307-2296}
  \affiliation{%
    \institution{CISPA Helmholtz Center for Information Security}
    \city{Saarbruecken}
    \country{Germany}
  }
  \email{arthur.correnson@cispa.de}

  \author{Bernd Finkbeiner}
  \orcid{0000-0002-4280-8441}
  \affiliation{%
    \institution{CISPA Helmholtz Center for Information Security}
    \city{Saarbruecken}
    \country{Germany}
  }
  \email{finkbeiner@cispa.de}

  \begin{abstract}
    Temporal logics for hyperproperties have recently emerged as an
    expressive specification technique for relational properties of
    reactive systems.
    While the model checking problem for such logics has been widely studied,
    there is a scarcity of deductive proof systems for temporal hyperproperties.
    In particular, hyperproperties with an alternation of universal and existential quantification over system executions are rarely supported.
    In this paper, we focus on hyperproperties of the form $\forall^*\exists^*\psi$, where $\psi$ is a safety relation.
    We show that hyperproperties of this class -- which includes many hyperliveness properties of interest -- can always be approximated by coinductive relations. This enables intuitive proofs by coinduction.
    Based on this observation, we define \textsc{HyCo} (\textbf{Hy}perproperties, \textbf{Co}inductively), a mechanized framework to reason about temporal hyperproperties within the Coq proof assistant.
    We detail the construction of HyCo, provide a proof of its soundness, and exemplify its use by applying it to the verification of reactive systems modeled as imperative programs with nondeterminism and I/O.
  \end{abstract}

\begin{CCSXML}
<ccs2012>
  <concept>
      <concept_id>10003752.10003790.10003793</concept_id>
      <concept_desc>Theory of computation~Modal and temporal logics</concept_desc>
      <concept_significance>500</concept_significance>
      </concept>
  <concept>
    <concept_id>10003752.10003790.10002990</concept_id>
    <concept_desc>Theory of computation~Logic and verification</concept_desc>
    <concept_significance>500</concept_significance>
  </concept>
  <concept>
      <concept_id>10003752.10010124.10010138.10010142</concept_id>
      <concept_desc>Theory of computation~Program verification</concept_desc>
      <concept_significance>300</concept_significance>
      </concept>
</ccs2012>
\end{CCSXML}

\ccsdesc[500]{Theory of computation~Modal and temporal logics}
\ccsdesc[300]{Theory of computation~Program verification}
\ccsdesc[500]{Theory of computation~Logic and verification}

\keywords{Coinduction, Temporal Hyperproperties, Coq}

\maketitle

  \section{Introduction}

 Temporal logics for hyperproperties like HyperLTL~\cite{DBLP:conf/post/ClarksonFKMRS14} combine
temporal reasoning over infinite executions with the ability to
quantify universally and existentially over multiple 
executions. Such logics are ideally suited to verify reactive systems,
where the continuous interaction of the system with its environment results in an infinite sequence of events. In this context, temporal logics for hyperproperties enable the specification of information-flow security policies as well as other relational properties like robustness, knowledge, and causality.
For example, the information-flow policy of \emph{noninference}~\cite{McLean:1994:GeneralTheory} requires that an observer of a system should
not be able to observe any difference in the behavior of the system if we replace its high-security inputs by a dummy value~$\lambda$. Thus, from the point of view of a potential attacker, sensitive data manipulated by the system are indistinguishable from $\lambda$.
Expressing this requirement formally requires quantification over several executions of the system.
In HyperLTL, noninference is expressed as the formula
\[ \forall \tau_1 \, \exists \tau_2 \, \LTLsquare\, (hi_{\tau_2} = \lambda \wedge lo_{\tau_1} = lo_{\tau_2} ) \]
where $hi$ denotes the high-security inputs and $lo$ the low-security outputs of the system.
This formula requires for every execution the existence of another execution where the high-security inputs to the system are replaced by a dummy value~$\lambda$ and yet the low-security outputs remain the same. Noninference is an example of \textit{hyperliveness} property \cite{4556678}.

There has been a lot of recent research on finite-state and
abstraction-based model checking for temporal
hyperproperities~\cite{FRS15,DBLP:conf/tacas/HsuSB21,DBLP:conf/cav/BeutnerF22,DBLP:conf/tacas/BeutnerF23,DBLP:conf/tacas/Beutner24}. Somewhat surprisingly, however, there is a scarcity of deductive proof systems supporting logics like HyperLTL. In particular, contrary to more traditional Hoare-style program logics, there exists, to the best of our knowledge, no mechanized framework to reason about temporal hyperproperties within a proof assistant.
This is unfortunate, because the automated approaches are still very limited in
their scalability; the verification of realistic systems appears far
out of reach.

A fragment that is often supported by deductive approaches is the
special case of
$k$-hypersafety~\cite{AntonopoulosGHK17,ShemerGSV19,FarzanV19,FarzanV20},
where all executions are quantified universally. This case is
attractive, because the verification of a $k$-hypersafety property can
be reduced to the verification of a trace property of the $k$-fold
selfcomposition of the given system.

The general case is much more difficult. Hyperliveness properties like noninference contain an alternation between universal and existential quantification. The $\forall\exists$ pattern occurs in many hyperproperties where the existential quantifier resolves the potential nondeterminism in the system. To prove such hyperproperties, we must, for every
execution, establish the existence of another execution such that the
two executions respect a specified temporal relation. A
straight-forward idea is to provide an explicit witness for the
existential quantifier (cf.~\cite{DBLP:conf/csfw/LamportS21}).
One way to do this is to view the interaction between the universal and existential quantifiers as an infinite game, such that a winning strategy for the existential player provides a witness for the existential quantifier.
For finite-state systems, the winning strategy can be found using techniques from reactive
synthesis~\cite{DBLP:conf/cav/CoenenFST19}.
In the deductive setting, guessing a winning strategy upfront is not only difficult; any error in the initial choice of
strategy might furthermore only be revealed very late in the proof,
requiring a restart of the proof from the very beginning.  A practical proof
technique should therefore be \emph{incremental}, refining the instantiation of the existential
quantifier on-the-fly as needed during the proof.

In this paper, we present such an incremental proof
technique. We accomplish this by linking temporal hyperproperties to
coinductive relations. Coinductive relations are supported by
parameterized coinduction \cite{paco}, a powerful proof technique 
with built-in support for incremental proofs.
This coalgebraic view precisely coincides with the
game-theoretic interpretation in the sense that they can prove exactly
the same properties. However, the coalgebraic approach is much better suited for an interactive proof assistant.
In summary, we make the following contributions:

 \begin{enumerate}[leftmargin=*, itemsep=5pt]
    \item We introduce \textsc{HyCo}, a coinductive relation to reason about the difficult class of hyperproperties of the form $\forall^*\exists^*\psi$, where $\psi$ is a safety relation between traces.
    Importantly, \textsc{HyCo} is language agnostic: it does not commit to a specific programming language to model systems, nor to a particular logic to specify trace relations.

    \item We formally prove the soundness of HyCo in the Coq proof assistant.
    Our proof relies on a subtle use of the axiom of functional choice to resolve the underlying existential quantification.

    \item We further equip \textsc{HyCo} with a collection of incremental reasoning rules, useful up-to techniques, and specialized rules for the case where properties are specified using LTL-style temporal modalities.
    The soundness of the rules for temporal modalities critically relies on the 
    notion of \textit{derivatives of trace relations}, combined with the idea to reason \textit{up to stronger trace relations}.

    \item Finally, we show that \textsc{HyCo} applies naturally to the verification of reactive systems modeled as imperative programs with I/O.
    We provide specialized reasoning rules for this case, and give examples of program proofs using HyCo in Coq.
  \end{enumerate}
 
  \section{Preliminaries}

  \subsection{Infinite Sequences and Set Notations}

  Given an alphabet $\Sigma$, we note $\Sigma^\omega$ the set of all infinite sequences of letters in $\Sigma$. Given an infinite sequence $\tau \in \Sigma^\omega$, we note $\tau[i]$ ($i \in \mathbb{N}$) the letter at position $i$ in $\tau$, and we note $\tau[i...]$ the infinite trace obtained by ignoring the first $i$ letters of $\tau$. In particular, if $\tau = \tau_0\tau_1\tau_2...$ we have $\tau[0] = \tau_0$ and $\tau[1...] = \tau_1\tau_2...$.
  Given two sets $A$ and $B$, note $A \equiv B$ if $A$ and $B$ are mutually included.

  \subsection{Labeled Transition Systems}

  Throughout this paper, we model reactive systems as labeled transition systems (LTS). Formally, a LTS is a triple $(\mathcal{S}, \mathcal{E}, \mathcal{I}, \to)$ where $\mathcal{S}$ is an arbitrary set of states, $\mathcal{E}$ is an arbitrary set of \textit{observable events}, $\mathcal{I} \in \mathcal{S}$ is an \textit{initial} state, and $\ \to \ \subseteq \mathcal{S} \times \mathcal{E} \cup \{ \emptyset \} \times \mathcal{S}$ is a labeled \textit{transition relation} between two states and an (optional) event. Given two states $s_1, s_2 \in \mathcal{S}$, the relation $s_1 \xrightarrow{\emptyset} s_2$ indicates that $s_1$ can transition to $s_2$ without emitting any event. When $e \in \mathcal{E}$, $s_1 \xrightarrow{e} s_2$ indicates that $s_1$ can transition to $s_2$ while emitting the observable event $e$.
  We also adopt the following notation conventions: \begin{align*}
    \to \quad &\textrm{stands for} \xrightarrow{\emptyset}\\
    \to^* \quad &\textrm{is the reflexive transitive closure of $\to$}\\
    \overset{e}{\rightsquigarrow} \quad &\textrm{is a shorthand for} \to^* \xrightarrow{e}
  \end{align*}

  \paragraph*{Trace Semantics}

  Given a LTS $\mathit{TS} = (\mathcal{S}, \mathcal{E}, \mathcal{I}, \to)$, a trace originating from a state $s \in \mathcal{S}$ is an infinite sequence of events $\tau \in \mathcal{E}^\omega$ such that there exists an infinite sequence of states $\pi \in \mathcal{S}^\omega$ with \[
    \pi_0 = s \wedge \forall i, \iosteps{\pi_i}{\tau_i}{\pi_{i + 1}}
  \]

  Given a state $s \subseteq \mathcal{S}$, we note $\mathit{Traces}_{\mathit{TS}}(s) \subseteq \mathcal{E}^\omega$ the set of traces originating from $s$.
  When the transition system being considered is clear from context, we simply note $\mathit{Traces}(s)$ for the traces of $s$.
  If $S \subseteq \mathcal{S}$ we note $\mathit{Traces}_\mathit{TS}(S) \triangleq \bigcup_{s \in S} \mathit{Traces}(s)$. If $\mathcal{I}$ is the initial state of $\mathit{TS}$, we note $\mathit{Traces}(\mathit{TS}) \triangleq \mathit{Traces}(\mathcal{I})$ the traces of $\mathit{TS}$.

  \subsection{Temporal Hyperproperties}

  \newcommand{\hltl}{$\textrm{Hyper}_E$\xspace}

  To specify hyperproperties of LTS, we introduce \hltl, an event-based variant of HyperLTL with support for arbitrary temporal relations between traces.
  A \hltl formula $\Psi$ starts with a sequence of $n$ quantifiers ranging over the traces of some LTS and then specifies a $m$-ary ($m \ge n$) trace relation $\psi$. \[
    \Psi ::= \forall^\mathit{TS}\,\Psi ~~\mid~~ \exists^\mathit{TS}\,\Psi ~~\mid~~ \psi\\
  \]

  Given an (ordered) $n$-tuple of traces $\Gamma$, we note $\Gamma \models \Psi$ when $\Gamma$ satisfies the relation specified by~$\Psi$. The model relation is defined as follows:
  \begin{align*}
    \Gamma \models \forall^\mathit{TS}\,\psi &\iff \forall \tau \in \mathit{Traces}(\mathit{TS}), (\Gamma, \tau) \models \psi\\
    \Gamma \models \exists^\mathit{TS}\, \psi &\iff \exists \tau \in \mathit{Traces}(\mathit{TS}), (\Gamma, \tau) \models \psi\\
    \Gamma \models \psi &\iff \Gamma \in \psi
  \end{align*}

  \noindent We will often consider relations $\psi$ expressed in (a fragment) of linear temporal logic.
  The basic assertions are $n$-ary relations $\varphi$ between events, and formulas are obtained by using boolean connectives, and the temporal modalities $\texttt{W}$ (weak-until), $\LTLsquare$ (always), and $\LTLnext$ (next).
  \[
    \psi ::= \varphi \mid \psi \wedge \psi \mid \psi \vee \psi \mid \psi \ \texttt{W} \ \psi \mid \LTLsquare \psi \mid \LTLnext \psi
  \]
  
  Given an $n$-tuple of traces $\Gamma = (\tau_1, ..., \tau_n)$, we note $\Gamma \models \psi$ when $\Gamma$ satisfies the temporal relation defined by $\psi$.
  We note $\Gamma[i]$ the $n$-tuple of events $(\tau_1[i], ..., \tau_n[i])$, and $\Gamma[i...]$ the $n$-tuple of traces $(\tau_1[i...], ..., \tau_n[i...])$.
  The interpretation of the modalities is defined as follows: \begin{align*}
    \Gamma \models \varphi &\iff \Gamma[0] \in \varphi\\
    \Gamma \models \LTLnext \psi &\iff \Gamma[1...] \models \psi\\
    \Gamma \models \LTLsquare \psi &\iff \forall i, \Gamma[i...] \models \psi\\
    \Gamma \models \psi_1 \ \texttt{W} \ \psi_2 &\iff (\forall i, \Gamma[i...] \models \psi_1) \vee
    (\exists i, \Gamma[i...] \models \psi_2 \wedge \forall j < i, \Gamma[j...] \models \psi_1)
  \end{align*}

  When $\psi$ is defined using combinations of LTL modalities, we freely interchange $\psi$ and its set of models $\{ \ \Gamma \mid \Gamma \models \psi \ \}$.

  \subsection{Parameterized Coinduction}

  We recall some basics about coinduction. Let $(\mathcal{P}(A), \subseteq, \cup)$ be the complete lattice of subsets of $A$, and let $F : \mathcal{P}(A) \xrightarrow{\mathit{mon}} \mathcal{P}(A)$ be a monotone operator on subsets. Then, $F$ has a greatest fixed point $\nu . F$ and \[
    \nu . F = \bigcup \{ \ R \mid R \subseteq F(R) \ \}
  \]

  An immediate consequence of this observation is that whenever we have a property $P$ defined as a greatest fixed point $P = \nu . F_P$ (i.e., defined \textit{coinductively}), we have a systematic method to show that an element $x$ satisfies $P$: it suffices to prove that $x$ is in some postfixed point $R$ of $F_P$. In this context, $R$ is usually called the \textit{coinduction hypothesis}.
  
  \begin{lemma}[Coinduction principle]
    \label{thm:coinduction}
    $X \subseteq \nu . F \iff \exists R, X \subseteq R \wedge R \subseteq F(R)$
  \end{lemma}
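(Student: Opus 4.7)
The plan is to prove the two implications separately, using only the Knaster–Tarski characterization $\nu.F = \bigcup \{ R \mid R \subseteq F(R) \}$ recalled immediately above the statement, together with monotonicity of $F$.

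For the $(\Leftarrow)$ direction, suppose some $R$ satisfies both $X \subseteq R$ and $R \subseteq F(R)$. Then $R$ is by definition a postfixed point of $F$, so it occurs in the family whose union is $\nu.F$. Hence $R \subseteq \nu.F$, and composing with $X \subseteq R$ yields $X \subseteq \nu.F$.

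For the $(\Rightarrow)$ direction, assume $X \subseteq \nu.F$. The natural witness is $R := \nu.F$ itself. The inclusion $X \subseteq R$ is immediate, so everything reduces to establishing that $\nu.F$ is itself a postfixed point, i.e.\ $\nu.F \subseteq F(\nu.F)$. For any postfixed point $S$ of $F$, monotonicity of $F$ combined with $S \subseteq \nu.F$ gives $S \subseteq F(S) \subseteq F(\nu.F)$. Taking the union over all such $S$ and appealing once more to the Knaster–Tarski characterization yields $\nu.F \subseteq F(\nu.F)$, as required.

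There is no real obstacle; the proof is entirely mechanical once one has the Knaster–Tarski formula, and the only ingredient beyond unfolding definitions is the use of monotonicity of $F$ to transport postfixed points along the inclusion $S \subseteq \nu.F$. In a mechanized setting (such as the Coq development the paper is building towards) the argument would be a direct consequence of the library characterization of greatest fixed points of monotone operators on complete lattices.
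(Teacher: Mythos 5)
Your proof is correct and follows exactly the route the paper intends: the paper states the lemma as an immediate consequence of the Knaster--Tarski characterization $\nu.F = \bigcup\{R \mid R \subseteq F(R)\}$ and omits the details, which you have filled in correctly (including the only nontrivial step, using monotonicity to show $\nu.F$ is itself a postfixed point for the forward direction).
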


  A common use case for coinduction is proofs by simulation.
  As an example, let us consider the following two finite LTSs:

  \begin{center}
    \textbf{$TS_1$}:
    \begin{minipage}{.3\textwidth}
      \centering
      \begin{tikzpicture}
        \node[initial, state] (A) {$q_0$};
        \node[state, above right = of A, yshift=-0.5cm] (B) {$q_1$};
        \node[state, below right = of A, yshift=+0.5cm] (C) {$q_2$};
        \draw[->] (A) edge node{$a$} (B);
        \draw[->] (A) edge[below left] node{$a$} (C);
        \draw[->] (B) edge[loop right] node{$b$} (B);
        \draw[->] (C) edge[loop right] node{$c$} (C);
      \end{tikzpicture}
    \end{minipage}%
    \textbf{$TS_2$}:
    \begin{minipage}{.3\textwidth}
      \centering
      \begin{tikzpicture}
        \node[initial, state] (A) {$s_0$};
        \node[state, right = of A] (B) {$s_1$};
        \draw[->] (A) edge node{$a$} (B);
        \draw[->] (B) edge[loop above] node{$b$} (B);
        \draw[->] (B) edge[loop below] node{$c$} (B);
      \end{tikzpicture}
    \end{minipage}
  \end{center}

  \noindent Clearly, all traces of $\mathit{TS}_1$ are also valid traces of $\mathit{TS}_2$ (i.e., $\mathit{TS}_1$ \textit{refines} $\mathit{TS}_2$). To prove this fact, one way is to enumerate the traces of $\mathit{TS}_1$, but this would be impossible if the number of traces was infinite.
  Another way it to reduce the reasoning about traces to reasoning about states by establishing a \textit{simulation relation}. The idea is to show the existence of a mapping $R \subseteq \mathcal{S}_1 \times \mathcal{S}_2$ between states of $\mathit{TS}_1$ and $\mathit{TS}_2$ such that $(q_0, s_0) \in R$, and such that for every pair of states $(q, s) \in R$, if $\iosteps{q}{e}{q'}$ then there exists a state $s'$ with $\iosteps{s}{e}{s'}$ and $(q', s')$ is again in $R$. The existence of such a mapping ensures that for all traces originating from $q_0$, there is a way to reproduce the same trace from $s_0$. This intuition can be understood in coalgebraic terms by observing that the existence of a simulation relation $R$ containing $(q_0, s_0)$ simply amounts to proving $(q_0, s_0) \in \texttt{sim}$ where $\texttt{sim}$ is the coinductive relation defined as follows: \begin{align*}
    \texttt{simF}(R) &\triangleq \{ (q, s) \mid \forall e . \forall q' . \iosteps{q}{e}{q'} \implies \exists s', \iosteps{s}{e}{s'} \wedge (q', s') \in R \}\\
    \texttt{sim} &\triangleq \nu . \texttt{simF}
  \end{align*}
  By \Cref{thm:coinduction}, $(q_0, s_0) \in \texttt{sim}$ if and only if there exists some relation $R$ containing $(q_0, s_0)$ and such that $R \subseteq \texttt{simF}(R)$. Going back to our simple example, $R = \{ (q_0, s_0), (q_1, s_1), (q_2, s_1) \}$ is a postfixed point of \texttt{simF}, which proves trace-inclusion.

  For our previous example, guessing a simulation $R$ was not too difficult. For larger examples, especially if we consider transition systems with infinite state-spaces, finding $R$ can be extremely tedious. Instead we could also start with $R = \{ (q_0, s_0) \}$, and explore the transition systems from there, carefully matching transitions in the left-hand transition system with corresponding transitions in the right-hand transition system and adding pairs of states to $R$ until we reach a postfixed point. To achieve this incremental proof style, Hur et al. proposed \textit{parameterized coinduction} \cite{paco}.
  The key idea is to replace the greatest fixed point $\nu . F$ with a parameterized variant $G_F(H)$ where $H$ is a current \textit{guess} for the coinduction hypothesis $R$.
  The \emph{parameterized greatest fixed point} operator is defined as \[
    G_F(H) \triangleq \nu . (\lambda X. F(X \cup H))
  \]
  By definition, we have $\nu . F \equiv G_F(\emptyset)$, which means that any time we need to prove a theorem of the form $X \subseteq \nu . F$, we can prove $X \subseteq G_F(\emptyset)$ instead.
  An important difference, however, is that $G_F$ can be equipped with the following set of \textit{incremental} reasoning rules:

  \setlength{\FrameSep}{1pt}
  \begin{figure}[ht!]
      \begin{mathpar}
        \inferrule[Init]{X \subseteq G_F(\emptyset)}{X \subseteq \nu F}
        \qquad
        \inferrule[Accumulate]{X \subseteq G_F(H \cup X)}{X \subseteq G_F(H)}
        \qquad
        \inferrule[Step]{X \subseteq F(H \cup G_F(H))}{X \subseteq G_F(H)}
      \end{mathpar}
      \vspace{-0.5cm}
      \caption{Rules of parameterized coinduction}
      \label{fig:coinduction}
      \Description{Rules of parameterized coinduction}
  \end{figure}

  Briefly,
  \begin{enumerate}
    \item The rule \textsc{Init} exploits the equation $\nu F \equiv G_F(\emptyset)$ to initiate a proof by parameterized coinduction.
    \item The rule \textsc{Accumulate} allows us to incrementally extend the current guess $H$
    \item The rule \textsc{Step} allows us to make progress in the proof by unfolding the fixed point equation $G_F(H) = F(H \cup G_F(H))$
  \end{enumerate}
  We refer the reader to \cite{paco} for a more detailed explanation of parameterized coinduction and for proofs of these reasoning rules.
  For now, let us go back to our simulation example, and give an incremental proof by parameterized coinduction.

  \begin{align*}
      & (q_0, s_0) \in \nu . \texttt{simF}\\
    \iff & (q_0, s_0) \in G_{\texttt{simF}}(\emptyset) & \textrm{\color{gray}(by \textsc{Init})}\\
    \Longleftarrow~& \underbrace{\exists s', \iosteps{s_0}{a}{s'} \wedge (q_1, s') \in G_{\texttt{simF}}(\emptyset)}_{\textrm{find a step matching $q_0 \xrightarrow{a} q_1$}} \wedge \underbrace{\exists s', \iosteps{s_0}{a}{s'} \wedge (q_2, s') \in G_{\texttt{simF}}(\emptyset)}_\textrm{find a step matching $q_0 \xrightarrow{a} q_2$} & \textrm{\color{gray}(by \textsc{Step})}\\
    \Longleftarrow~& (q_1, s_1) \in G_{\texttt{simF}}(\emptyset) \wedge (q_2, s_1) \in G_{\texttt{simF}}(\emptyset) & \textrm{\color{gray}(pick $s' = s_1$)}\\
    \Longleftarrow~& (q_1, s_1) \in G_{\texttt{simF}}(\{ (q_1, s_1) \}) \wedge (q_2, s_1) \in G_{\texttt{simF}}(\{ (q_2, s_1) \}) & \textrm{\color{gray}(by \textsc{Accumulate})}\\
    \Longleftarrow~& \underbrace{\exists s', \iosteps{s_1}{b}{s'} \wedge (q_1, s') \in \{ (q_1, s_1) \} \cup G_{\texttt{simF}}(\{ (q_1, s_1) \})}_\textrm{find a step matching $q_1 \xrightarrow{b} q_1$} \ \wedge & \textrm{\color{gray}(by \textsc{Step})}\\
    & \underbrace{\exists s', \iosteps{s_1}{c}{s'} \wedge (q_2, s') \in \{ (q_2, s_1) \} \cup G_{\texttt{simF}}(\{ (q_2, s_1) \})}_\textrm{find a step matching $q_2 \xrightarrow{c} q_2$}\\
    \Longleftarrow~& (q_1, s_1) \in \{ (q_1, s_1 ) \} \wedge (q_2, s_1) \in \{ (q_2, s_1 ) \} & \textrm{\color{gray}(pick $s' = s_1$)}
  \end{align*}

  \section{Hyperliveness and Relational Invariance, Coinductively}
  \label{sec:fei}

  \newcommand*{\feif}{\texttt{feiF}}
  \newcommand*{\fei}{\texttt{fei}}

  In the previous section, we recalled that trace inclusion between two transition systems can be reduced to reasoning about states by finding a simulation relation.
  Further, parameterized coinduction can be used to construct a simulation relation incrementally.
  Trace inclusion is an example of hyperliveness property: given two transition systems $\mathit{TS}_1$ and $\mathit{TS}_2$ emitting events in $\mathcal{E}$,
  it can be specified as the formula $\forall^{\mathit{TS}_1}\exists^{\mathit{TS}_2}\LTLsquare \mathit{eq}$ where $\mathit{eq} = \{ \ (e, e) \mid e \in \mathcal{E} \ \}$ is equality of events.
  In this section, we show that parameterized coinduction can also be used for incremental proofs of any hyperproperties of the form $\forall\exists\LTLsquare\varphi$ where $\varphi$ is a binary relation between events. To do so, we generalize the operator $\texttt{simF}$ and define a coinductive relation $\texttt{fei}_\varphi$ that soundly underapproximates $\forall\exists\LTLsquare\varphi$.
  We describe the construction of $\texttt{fei}_\varphi$, show that it can be equipped with useful reasoning rules, and propose a formal proof of soundness via the axiom of functional choice.
  In \Cref{sec:fe}, we extend this construction to support the more general class of $\forall\exists\psi$ hyperproperties where $\psi$ is a safety relation between traces.

  \subsection{Generalizing Simulation Relations}

  For the remainder of this section, let $\mathit{TS}_1 = (\mathcal{S}_1, \mathcal{E}_1, \mathcal{I}_1, \to_1)$ and $\mathit{TS}_2 = (\mathcal{S}_2, \mathcal{E}_2, \to_2)$ be two LTSs, and let us fix a binary relation between events $\varphi \subseteq \mathcal{E}_1 \times \mathcal{E}_2$.
  We define the relation $\fei_\varphi$ (\textbf{f}orall, \textbf{e}xists, \textbf{i}nvariant) as follows: \begin{align*}
    \feif_\varphi(R) &\triangleq \{ \ (s_1, s_2) \mid \forall \iosteps{s_1}{e_1}{s'_1} \implies \exists \iosteps{s_2}{e_2}{s'_2} \wedge (e_1, e_2) \in \varphi \wedge R(s'_1, s'_2) \ \}\\
    \fei_\varphi &\triangleq \nu . \feif_\varphi
  \end{align*}

  As expected, the $\fei_\varphi$ relation provides a sound proof technique to verify hyperproperties of the form $\forall\exists\LTLsquare\varphi$. More precisely, if the initial states of $\mathit{TS}_1$ and $\mathit{TS}_2$ are related by $\fei_\varphi$, it can then be concluded that $\forall^{\mathit{TS}_1}\exists^{\mathit{TS}_2}\LTLsquare \varphi$ is valid.

  \begin{theorem}[Soundness of $\fei_\varphi$]
    \label{thm:fei-sound}
    Let $\mathit{TS}_1$ and $\mathit{TS}_2$ be two LTSs and let $\varphi \subseteq \mathcal{E}_1 \times \mathcal{E}_2$. We have \[
      (\mathcal{I}_1, \mathcal{I}_2) \in \fei_\varphi \implies \models \forall^{\mathit{TS}_1}\exists^{\mathit{TS}_2}\LTLsquare \varphi
    \]
  \end{theorem}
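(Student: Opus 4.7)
The plan is to first unfold the semantic definition of $\models \forall^{\mathit{TS}_1}\exists^{\mathit{TS}_2}\LTLsquare \varphi$. Concretely, given an arbitrary trace $\tau_1 \in \mathit{Traces}(\mathit{TS}_1)$, witnessed by a state sequence $\pi^1 \in \mathcal{S}_1^\omega$ with $\pi^1_0 = \mathcal{I}_1$ and $\iosteps{\pi^1_i}{\tau_1[i]}{\pi^1_{i+1}}$, I must produce a trace $\tau_2 \in \mathit{Traces}(\mathit{TS}_2)$ together with a witness state sequence $\pi^2 \in \mathcal{S}_2^\omega$ satisfying $\pi^2_0 = \mathcal{I}_2$, $\iosteps{\pi^2_i}{\tau_2[i]}{\pi^2_{i+1}}$, and $(\tau_1[i], \tau_2[i]) \in \varphi$ for every $i$, all the while preserving the invariant $(\pi^1_i, \pi^2_i) \in \fei_\varphi$.

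The heart of the argument is the fixed-point equation $\fei_\varphi = \feif_\varphi(\fei_\varphi)$. Unfolding it once yields, for every $(s_1, s_2) \in \fei_\varphi$ and every transition $\iosteps{s_1}{e_1}{s'_1}$, the existence of an event $e_2$ and a successor $s'_2$ such that $\iosteps{s_2}{e_2}{s'_2}$, $(e_1, e_2) \in \varphi$, and $(s'_1, s'_2) \in \fei_\varphi$. I would then set up a step function that, given the current pair of related states and the next transition of $\mathit{TS}_1$, chooses a matching continuation in $\mathit{TS}_2$. Iterating this step function along $\pi^1$ and $\tau_1$, starting from $(\mathcal{I}_1, \mathcal{I}_2)$, yields both $\pi^2$ and $\tau_2$.

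The delicate point, and the main obstacle, is that this selection must be made simultaneously at every index $i$, producing infinitely many dependent witnesses. To turn the pointwise ``there exists'' statement into an actual function returning a successor event and a successor state, I would invoke the axiom of \emph{functional choice}: packaging the unfolded fixed-point property as ``for every $(s_1, s_2, e_1, s'_1)$ satisfying $(s_1, s_2) \in \fei_\varphi$ and $\iosteps{s_1}{e_1}{s'_1}$, there exists $(e_2, s'_2)$ satisfying the above conditions'', choice produces a global selector $F$. The trace $\tau_2$ and the state sequence $\pi^2$ are then defined corecursively by iterating $F$ along $\tau_1$ and $\pi^1$. The domain of $F$ is naturally a dependent sum, so the variant of choice required is functional choice on dependent products rather than a purely non-constructive choice principle; this is precisely the subtlety the introduction flags.

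Once $\tau_2$ and $\pi^2$ are constructed, a routine induction on $i$ closes the proof: the invariant $(\pi^1_i, \pi^2_i) \in \fei_\varphi$ is preserved by construction, hence $(\tau_1[i], \tau_2[i]) \in \varphi$ for all $i$ and $\pi^2$ is a valid witness that $\tau_2 \in \mathit{Traces}(\mathit{TS}_2)$. This yields $\Gamma = (\tau_1) \models \exists^{\mathit{TS}_2}\LTLsquare \varphi$, and since $\tau_1$ was arbitrary, $\models \forall^{\mathit{TS}_1}\exists^{\mathit{TS}_2}\LTLsquare \varphi$ follows.
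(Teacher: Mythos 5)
Your proposal is correct and follows essentially the same route as the paper: unfold the fixed point $\fei_\varphi = \feif_\varphi(\fei_\varphi)$ to obtain a pointwise existence of matching transitions, apply the axiom of functional choice to package this into a deterministic step function, corecursively iterate that function to build $\tau_2$ (and its witnessing state sequence), and verify $\LTLsquare\varphi$ by induction on positions. The paper merely packages the same idea slightly differently, as an abstract transition system over ``proof states'' $(s_1, s_2, \tau_1)$ together with a Progress lemma to which choice is applied, but the substance---including the reliance on a dependent form of functional choice---is the one you identified.
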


  While this theorem might seem like an unsurprising generalization of simulation proofs, its formal proof turns out to be much more involved.
  Indeed, to prove this theorem we need to find a suitable trace to instantiate the inner existential quantifier. Note that in the simpler case of simulation proofs, since the two quantified traces are required to be equal, the existential quantifier is superfluous and it suffices to instantiate it with a copy of the universal trace. In the more general case of $\forall\exists\LTLsquare\varphi$, the witness for the existential quantifier depends on the relation $\varphi$ (and the quantified systems) and we need to somehow \textit{extract} it from the proof of $(\mathcal{I}_1, \mathcal{I}_2) \in \fei_\varphi$.
  We delay the explanation of this process to \Cref{sec:fei-proof}.

  \subsection{Incremental Proofs}

  As discussed in the preliminaries, a limitation of standard coinductive proofs is that they require us to find a coinduction hypothesis up-front. Instead, parameterized coinduction \cite{paco} allows us to perform coinductive proofs \textit{incrementally} by starting from $\emptyset$ and progressively
  guessing portions of a coinductive hypothesis until a postfixed point is reached.
  In this section, we leverage parameterized coinduction to prove hyperproperties incrementally via $\fei$.
  Taking inspiration from notations introduced in \cite{stuttering_for_free}, we start by defining the following two \textit{semantic quadruples}: \begin{align*}
    \iquadruple{\fbox{$H$}}{s_1}{s_2}{\varphi} &\triangleq (s_1, s_2) \in G_{\feif_\varphi}(H)\\
    \iquadruple{\dbox{$H$}}{s_1}{s_2}{\varphi} &\triangleq (s_1, s_2) \in H \cup G_{\feif_\varphi}(H)
  \end{align*}

  Note that $(s_1, s_2) \in \fei_\varphi \iff \iquadruple{\fbox{$\emptyset$}}{s_1}{s_2}{\varphi}$. In turn, to prove $\models \forall^{\mathit{TS}_1}\exists^{\mathit{TS}_2}\LTLsquare\varphi$, it suffices to show $\iquadruple{\fbox{$\emptyset$}}{s_1}{s_2}{\varphi}$. Further, since these quadruples are defined in terms of the parameterized greatest fixed point operator $G$, they admit incremental reasoning principles.
  The intention behind the choice of notation is that $\fbox{$H$}$ represents a \textit{guarded} coinduction hypothesis (i.e., it cannot yet be used to conclude a proof), whereas $\dbox{$H$}$ represents an \textit{unguarded} coinduction hypothesis (i.e., it can be used to immediately conclude a proof if $(s_1, s_2) \in H$).
  The core reasoning rules associated with these quadruples are presented in \Cref{fig:fei_core}. In the conclusion of some rules, we note $H^?$ instead of $\dbox{$H$}$ or $\fbox{$H$}$ when the rule can be applied regardless of whether the hypothesis is currently guarded or not.
  
  \begin{figure}[!ht]
      \begin{mathpar}
        \inferrule[Init]{ \iquadruple{\ \fbox{$\emptyset$}}{\mathcal{I}_1}{\mathcal{I}_2}{\varphi}}{ \models \forall^{\mathit{TS}_1}\exists^{\mathit{TS}_2} \LTLsquare \varphi } \qquad
        \inferrule[Cycle]{ (s_1, s_2) \in H }{ \iquadruple{\ \dbox{$H$}}{s_1}{s_2}{\varphi} }\\
        \inferrule[Step]{\iquadruple{\ \forall \iosteps{s_1}{e_1}{s'_1}, \ \exists \iosteps{s_2}{e_2}{s'_2}, (e_1, e_2) \in \varphi \wedge \dbox{$H$}}{s_1'}{s_2'}{\varphi} }{ \iquadruple{\ H^{?}}{s_1}{s_2}{\varphi} }\\
        \inferrule[Invariant]{ (s_1, s_2) \in H' \\ \forall (s_1, s_2) \in H', \iquadruple{\ \fbox{$H \cup H'$}}{s_1}{s_2}{\varphi} }{ \iquadruple{\ H^{?}}{s_1}{s_2}{\varphi} }
      \end{mathpar}
      \caption{Incremental reasoning rules for $\fei$}
      \Description{Incremental reasoning rules for $\fei$}
      \label{fig:fei_core}
  \end{figure}

  All rules are derived from the soundness of $\fei$ (\Cref{thm:fei-sound}), the rules of parameterized coinduction (\Cref{fig:coinduction}), and the definition of $\feif$.
  The rule \textsc{Init} exploits the soundness of $\fei$ to initiate a proof by parameterized coinduction.
  When focusing on a pair of states $(s_1, s_2)$, the rule \textsc{Step} requires to prove that for every transition $\iosteps{s_1}{e_1}{s'_1}$, there exists a corresponding transition $\iosteps{s_2}{e_2}{s'_2}$ such that $e_1$ and $e_2$ satisfy the event-invariant $\varphi$. Applying \textsc{Step} transfers the focus to the pairs of states $(s'_1, s'_2)$ and releases the guard.
  \textsc{Cycle} allows us to finish a proof by creating a cycle: if we already encountered a pair $(s_1, s_2)$ earlier in the proof (i.e., if $(s_1, s_2) \in H$) we can use this fact to immediately conclude. Note that applying \textsc{Cycle} requires the hypothesis to be unguarded.
  Finally, the rule $\textsc{Invariant}$ is a reformulation of the rule \textsc{Accumulate} of parameterized coinduction. It extends the current coinduction hypothesis $H$ with a larger guess $H \cup H'$, where $H'$ contains at least the current pair of states.
  The cost to pay is that applying \textsc{Invariant} restores the guard, and requires to prove a quadruple for every state in $H \cup H'$ instead of just the initial pair of states.
  The benefit is that, after applying the rule \textsc{Invariant}, we can conclude a proof as soon as we reach again a pair of states that satisfy the \textit{invariant} $H'$.
  We note that the rules \textsc{Step} and \textsc{Invariant} can be applied regardless of whether the hypothesis is currently guarded or not.

  \begin{example}
    Let us consider the following transition system, whose events are $\mathcal{E} = \{ a, b \}$:
    \begin{center}
      \begin{tikzpicture}
        \node[initial, state] (A) {$s_0$};
        \node[state, above right = of A, yshift=-0.5cm] (B) {$s_1$};
        \node[state, below right = of A, yshift=+0.5cm] (C) {$s_2$};
        \draw[->] (A) edge (B);
        \draw[->] (A) edge (C);
        \draw[->] (B) edge[loop right] node{$a$} (B);
        \draw[->] (C) edge[loop right] node{$b$} (C);
      \end{tikzpicture}
    \end{center}

    \noindent Suppose we wish to prove $\models \forall^{\mathit{TS}}\exists^{\mathit{TS}}\LTLsquare(a_1 \leftrightarrow b_2)$ where $a_1 \leftrightarrow b_2$ stands for the binary relation $\{ \ (a, b), (b, a) \ \}$.
    We give an incremental proof using the rules of \Cref{fig:fei_core}.
    The proof starts by exploring the system until we reach states $s_1$ and $s_2$:
    \begin{align*}
      & \models \forall^\mathit{TS}\exists^\mathit{TS}\LTLsquare(a_1 \leftrightarrow b_2)\\
      \Longleftarrow \quad & \iquadruple{\fbox{$\emptyset$}}{s_0}{s_0}{a_1 \leftrightarrow b_2} & \textrm{\color{gray}(by \textsc{Init})}\\
      \Longleftarrow \quad & \iquadruple{\dbox{$\emptyset$}}{s_1}{s_2}{a_1 \leftrightarrow b_2} \wedge \iquadruple{\dbox{$\emptyset$}}{s_2}{s_1}{a_1 \leftrightarrow b_2} & \textrm{\color{gray}(by \textsc{Step})}
    \end{align*}
    Then, we observe that the only way out of the pair $(s_1, s_2)$ (resp. $(s_2, s_1)$) is to go back to $(s_1, s_2)$ (resp. $(s_2, s_1)$).
    This indicates that $H_1 = \{ (s_1, s_2) \}$ and $H_2 = \{ (s_2, s_1)\}$ are good choices of coinduction hypotheses. We therefore extend our current hypotheses with $H_1$ and $H_2$ using \textsc{Invariant}: \begin{align*}
      & \iquadruple{\dbox{$\emptyset$}}{s_1}{s_2}{a_1 \leftrightarrow b_2} \wedge \iquadruple{\dbox{$\emptyset$}}{s_2}{s_1}{a_1 \leftrightarrow b_2}\\
      \Longleftarrow \quad & \iquadruple{\fbox{$H_1$}}{s_1}{s_2}{a_1 \leftrightarrow b_2} \wedge \iquadruple{\fbox{$H_2$}}{s_2}{s_1}{a_1 \leftrightarrow b_2} & \textrm{\color{gray}(by \textsc{Invariant})}
    \end{align*}
    Using $\textsc{Step}$, $a$-transitions out of $s_1$ are matched with $b$-transitions out of $s_2$ (and vice-versa), thus maintaining the event-invariant $a_1 \leftrightarrow b_2$. Note that taking a step releases the guard! Since transitioning out of $(s_1, s_2)$ (resp. $(s_2, s_1)$) cycles back to $(s_1, s_2)$ (resp. $(s_2, s_1)$), we can then conclude with \textsc{Cycle}:
    \begin{align*}
      &\iquadruple{\fbox{$H_1$}}{s_1}{s_2}{a_1 \leftrightarrow b_2} \wedge \iquadruple{\fbox{$H_2$}}{s_2}{s_1}{a_1 \leftrightarrow b_2}\\
      \Longleftarrow \quad & \iquadruple{\dbox{$H_1$}}{s_1}{s_2}{a_1 \leftrightarrow b_2} \wedge \iquadruple{\dbox{$H_2$}}{s_2}{s_1}{a_1 \leftrightarrow b_2} & \textrm{\color{gray}(by \textsc{Step})}\\
      \Longleftarrow \quad & (s_1, s_2) \in H_1 = \{ (s_1, s_2) \} \wedge (s_2, s_1) \in H_2 = \{ (s_2, s_1) \} & \textrm{\color{gray}(by \textsc{Cycle})}
    \end{align*}
  \end{example}

  \subsection{Alignment Rules}
  \label{sec:align}

  Sometimes, it is convenient to be able to step through the execution from the left state and the right state at different paces in order to align the states in a certain way before establishing an invariant via \textsc{Invariant}.
  As an example, consider the following two transition systems:

  \begin{center}
    \begin{tikzpicture}
      \node (INIT) {$\mathit{TS_1}:$};
      \node[initial, state, right = of INIT] (A) {$s_0$};
      \node[state, right = of A] (B) {$s_1$};
      \draw[->] (A) edge (B);
      \draw[->] (B) edge[loop right] node{$a$} (B);
    \end{tikzpicture}
    \qquad
    \begin{tikzpicture}
      \node (INIT) {$\mathit{TS_2}:$};
      \node[initial, state, right = of INIT] (A) {$q_0$};
      \draw[->] (A) edge[loop right] node{$b$} (A);
    \end{tikzpicture}
  \end{center}

  Clearly, $\models \forall^{\mathit{TS}_1}\exists^{\mathit{TS}_2}\LTLsquare(a_1 \wedge b_2)$.
  One way to prove it is to use the \textsc{Step} rule to reach $(s_1, q_0)$, then \textsc{Invariant} to add $(s_1, q_0)$ to the current coinduction hypothesis, and conclude by using \textsc{Step} and \textsc{Cycle}. While this proof is perfectly valid, it forces us to use the \textsc{Step} rule two times to prove that an $a$ on the left-hand transition system can always be matched by a $b$ in the right-hand system. For such a small example, duplicating the reasoning is not difficult, but for more complicated systems, it might actually be extremely tedious. Instead, what we would like to do is to first align the states $s_1$ and $q_0$ by taking one step of computation in the left-hand transition system, then add $(s_1, q_0)$ to the current hypothesis, and finish the proof with a single application of the rule \textsc{Step} followed by \textsc{Cycle}.
  In this section, we introduce reasoning rules to support this intuition.

  To reproduce the proof we just described, we would need a rule of the following form: \begin{mathpar}
    \inferrule{ s_1 \to^* s'_1 \\ \iquadruple{\fbox{$H$}}{s'_1}{s_2}{\varphi} }{ \iquadruple{\fbox{$H$}}{s_1}{s_2}{\varphi} }
  \end{mathpar}

  This rule would allow us to skip silent computation steps in the left-hand transition system. Unfortunately, it is unsound in the presence of nondeterminism.
  Indeed, if along the path from $s_1$ to $s'_1$, there are possibilities to branch out to some other state $s''_1$, this rule would not ensure that the step to $s''_1$ can be matched by a corresponding step from $s_2$.
  A potential solution would be to \textit{determinize} the rule by universally quantifying on the post-states of $s_1$: \begin{mathpar}
    \inferrule{\forall s'_1, s_1 \to^* s'_1 \implies \iquadruple{\fbox{$H$}}{s'_1}{ s_2}{\varphi} }{ \iquadruple{\fbox{$H$}}{s_1}{ s_2}{\varphi} }
  \end{mathpar}

  This rule is naturally sound, as it forces us to cover all possible ways to transition out of $s_1$. However, instead of solving the initial problem, it makes it even worse by generating redundant proof obligations.
  Indeed, if we have a path $s_1 \to s^1_1 \to s^2_1 ... \to s'_1$, this rule will generate a proof obligation for all $s^i_1$. Instead, what we would like is to merge as many steps as possible to minimize the number of proof obligations.
  To achieve this goal, we need to be able to distinguish \textit{deterministic} sequences of computations from \textit{nondeterministic} ones. To this effect, we introduce the \textit{deterministic} transition relation $\to_\mathit{det}$ defined as follows:
  \begin{align*}
    s \to_\mathit{det} s' &\iff s \to s' \wedge (\forall e \forall s'', s \xrightarrow{e} s'' \implies s' = s'' \wedge e = \emptyset)
  \end{align*}

  \noindent Intuitively, a deterministic transition $s_1 \to_\mathit{det} s_2$ indicates that $s_2$ is the only possible successor of $s_1$ and further, that no event can be emitted by the transition to $s_2$. Using $\to_\mathit{det}$ instead of $\to$, we obtain the following sound rules:

  \begin{mathpar}
    \inferrule[Steps-L]{ s_1 \to^*_\mathit{det} s'_1 \\ \iquadruple{\fbox{$H$}}{s'_1}{s_2}{\varphi} }{ \iquadruple{\fbox{$H$}}{s_1}{s_2}{\varphi} } \qquad
    \inferrule[Steps-R]{ s_2 \to^* s'_2 \\ \iquadruple{\fbox{$H$}}{s_1}{s'_2}{\varphi} }{ \iquadruple{\fbox{$H$}}{s_1}{s_2}{\varphi} }
  \end{mathpar}

  \subsection{Soundness Proof}
  \label{sec:fei-proof}

  The initial claim that motivated the previous section is that $\fei_\varphi$ is a sound approximation of the hyperproperty $\forall\exists\LTLsquare\varphi$.
  This section is dedicated to the proof of this fact. While the high-level intuition is relatively straightforward, the formal proof turns out to be more involved. We start by giving an intuitive informal proof, and we then present a formal proof via the axiom of functional choice. The latter has been mechanized in the Coq proof assistant.

  \subsubsection*{Informal proof}

  Let $\varphi \subseteq \mathcal{E}_1 \times \mathcal{E}_2$ be a relation on events, $s_1 \in \mathcal{S}_1$ and $s_2 \in \mathcal{S}_2$ be two states, and assume that $(s_1, s_2) \in \fei_\varphi$. Under these assumptions, we would like to prove \[
    \forall \tau_1 \in \mathit{Traces}(s_1), \exists \tau_2 \in \mathit{Traces}(s_2), (\tau_1, \tau_2) \in \LTLsquare \varphi
  \]

  Given a trace $\tau_1 \in \mathit{Traces}(s_1)$, we need to construct an appropriate trace $\tau_2$ to instantiate the existential quantifier.
  Since $\tau_1 \in \mathit{Traces}(s_1)$, there has to exist some state $s'_1$ such that \[
    (1)~~\iosteps{s_1}{\tau_1[0]}{s'_1} \qquad (2)~~\tau_1[1...] \in \mathit{Traces}(s'_1)
  \]
  Since $(s_1, s_2) \in \fei_\varphi = \nu . \feif_\varphi$, by unfolding the fixed point equation $\fei_\varphi = \feif_\varphi(\nu . \feif_\varphi)$, unfolding the definition of $\feif$, and using (1) we obtain the existence of an event $e^0_2$ and a state $s'_2$ such that \[
    (3)~~\iosteps{s_2}{e^0_2}{s'_2} \qquad (4)~~(\tau_1[0], e^0_2) \in \varphi \qquad (5)~~(s'_1, s'_2) \in \fei_\varphi
  \]

  We can repeat this reasoning from the assumption that $(s'_1, s'_2) \in \fei_\varphi$ and from (2) to obtain a state $s''_1$, and event $e^1_2$, and a state $s''_2$ such that \[
    (6)~~\iosteps{s'_2}{e^1_2}{s''_2} \qquad
    (7)~~(\tau_1[1], e^1_2) \in \varphi \qquad
    (8)~~(s''_1, s''_2) \in \fei_\varphi
  \]
  
  Iterating this process an infinite number of times, we obtain a sequence of events $\tau_2 = e^0_2e^1_2e^2_2...$. By construction $\tau_2 \in \mathit{Traces}(s_2)$ and clearly $(\tau_1, \tau_2) \in \LTLsquare\varphi$ since $(\tau[i], e^i_2) \in \varphi$.

  \subsubsection*{Formal Proof via Classical Choice}

  The main challenge when mechanizing the previous proof in the Coq proof assistant is to formalize the intuition of "iterating this process an infinite number of times".
  We propose a solution in four stages: \begin{enumerate}
    \item We start by defining an abstract labeled transition system between \textit{proof states}. Transitions are labeled by events.
    \item We prove that this abstract system is progressive: for any proof state, we can always transition to a successor state.
    \item Using the axiom of functional choice and (2), we extract a deterministic and executable version of the abstract transition system that produces exactly one trace
    \item Using (3), we compute the unique trace of the abstract transition system.
    The system is constructed in such a way that this trace is guaranteed to be a valid witness for the existential quantifier.
  \end{enumerate}

  Given a relation $\varphi \subseteq \mathcal{E}_1 \times \mathcal{E}_2$, we define the set of \textit{proof states} $\Pi_\varphi$ as follows: \begin{align*}
    \Pi_\varphi &\subseteq \mathcal{S}_1 \times \mathcal{S}_2 \times \mathcal{S}^\omega_1 \triangleq \{ \ (s_1, s_2, \tau_1) \mid \tau_1 \in \mathit{Traces}(s_1) \wedge (s_1, s_2) \in \fei_\varphi \ \}\\
  \end{align*}

  We then define the following labeled \textit{proof-step} relation $\xrightarrow{-}_\varphi \subseteq \Pi_\varphi \times \mathcal{E}_2 \times \Pi_\varphi$: \[
    \inferrule{ \tau_1 = e_1 \cdot \tau'_1 \\ \iosteps{s_1}{e_1}{s'_1} \\ \iosteps{s_2}{e_2}{s'_2} \\ (e_1, e_2) \in \varphi }{(s_1, s_2, \tau_1) \xrightarrow{e_2}_\varphi
    (s'_1, s'_2, \tau'_1)}
  \]

  This transition relation describes an abstract transition system over proof states, and such that each transition generates an event in $\mathcal{E}_2$.
  Importantly, it can be proven that this abstract transition system is progressive: every proof state has a successor.

  \begin{lemma}[Progress]
    $\forall \pi \in \Pi_\varphi, \exists (e_2, \pi') \in \mathcal{E}_2 \times \Pi_\varphi, \pi \xrightarrow{e_2}_\varphi \pi'$
  \end{lemma}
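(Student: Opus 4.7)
Fix an arbitrary proof state $\pi = (s_1, s_2, \tau_1) \in \Pi_\varphi$. By definition of $\Pi_\varphi$ we have two facts to exploit: $\tau_1 \in \mathit{Traces}(s_1)$ and $(s_1, s_2) \in \fei_\varphi$. The goal is to manufacture an event $e_2 \in \mathcal{E}_2$ and a successor $\pi' \in \Pi_\varphi$ such that $\pi \xrightarrow{e_2}_\varphi \pi'$. The plan is essentially to peel off one step on each side and re-package the leftovers as $\pi'$.

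\textbf{Step 1: extract the first step on the left.} From $\tau_1 \in \mathit{Traces}(s_1)$, unfolding the definition of $\mathit{Traces}$, we obtain an infinite sequence of states $\sigma \in \mathcal{S}_1^\omega$ with $\sigma_0 = s_1$ and $\sigma_i \overset{\tau_1[i]}{\rightsquigarrow} \sigma_{i+1}$ for all $i$. Setting $e_1 \triangleq \tau_1[0]$, $s_1' \triangleq \sigma_1$, and $\tau_1' \triangleq \tau_1[1...]$, we get $\tau_1 = e_1 \cdot \tau_1'$, $s_1 \overset{e_1}{\rightsquigarrow} s_1'$, and moreover $\tau_1' \in \mathit{Traces}(s_1')$ as witnessed by the shifted sequence $\sigma[1...]$.

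\textbf{Step 2: match this step on the right.} Since $(s_1, s_2) \in \fei_\varphi = \nu.\feif_\varphi$, we unfold the fixed-point equation to obtain $(s_1, s_2) \in \feif_\varphi(\fei_\varphi)$. Applied to the transition $s_1 \overset{e_1}{\rightsquigarrow} s_1'$ just produced, the definition of $\feif_\varphi$ yields an event $e_2 \in \mathcal{E}_2$ and a state $s_2' \in \mathcal{S}_2$ with $s_2 \overset{e_2}{\rightsquigarrow} s_2'$, $(e_1, e_2) \in \varphi$, and $(s_1', s_2') \in \fei_\varphi$.

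\textbf{Step 3: package the successor.} Define $\pi' \triangleq (s_1', s_2', \tau_1')$. The conditions $\tau_1' \in \mathit{Traces}(s_1')$ (Step~1) and $(s_1', s_2') \in \fei_\varphi$ (Step~2) give $\pi' \in \Pi_\varphi$, and the four premises of the inference rule defining $\xrightarrow{-}_\varphi$ are exactly what we collected, so $\pi \xrightarrow{e_2}_\varphi \pi'$ holds.

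\textbf{Anticipated difficulty.} There is no genuine obstacle here; the proof is a one-step unfolding on each side followed by re-bundling. The only subtlety, and the reason this lemma is stated separately rather than inlined, is bookkeeping: to produce $\tau_1' \in \mathit{Traces}(s_1')$ in Step~1 one must remember to shift the witnessing state-sequence $\sigma$, and the use of the coinductive fixed point in Step~2 must be a \emph{single} unfolding of $\nu.\feif_\varphi$, not an attempt to reason about the whole trace at once. The heavy lifting of iterating this construction infinitely many times is deliberately deferred: that is precisely what the subsequent appeal to functional choice will handle, on top of this per-state progress lemma.
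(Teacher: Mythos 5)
Your proof is correct and follows exactly the same route as the paper's: unfold $\mathit{Traces}(s_1)$ to extract the first transition $\iosteps{s_1}{\tau_1[0]}{s'_1}$ together with $\tau_1[1...] \in \mathit{Traces}(s'_1)$, unfold the fixed point $\fei_\varphi = \feif_\varphi(\fei_\varphi)$ once to obtain the matching transition $\iosteps{s_2}{e_2}{s'_2}$ with $(\tau_1[0], e_2) \in \varphi$ and $(s'_1, s'_2) \in \fei_\varphi$, and repackage these as the successor proof state. Your extra care about shifting the witnessing state sequence is a detail the paper glosses over but is the same argument.
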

  \begin{proof}
    Let $(s_1, s_2, \tau_1) \in \Pi_\varphi$. By definition we have $\tau_1 \in \mathit{Traces}(s_1)$ and $(s_1, s_2) \in \fei_\varphi$.
    By unfolding the the definition of $\mathit{Traces}$, we easily obtain the existence of a state $s'_1$ such that \[
      \iosteps{s_1}{\tau_1[0]}{s'_1} \wedge \tau_1[1...] \in \mathit{Traces}(s'_1)
    \]
    By unfolding the definition of $\fei_\varphi$ and using the fact that $\iosteps{s_1}{\tau_1[0]}{s'_1}$, we obtain the existence of an event $e_2$ and a state $s'_2$ such that \[
        \iosteps{s_2}{e_2}{s'_2} \wedge (\tau_1[0], e_2) \in \varphi \wedge (s'_1, s'_2) \in \fei_\varphi
    \]
    Clearly, we have $(s'_1, s'_2, \tau[1...]) \in \Pi_\varphi$, and $(s_1, s_2, \tau_1) \xrightarrow{e_2}_\varphi (s'_1, s'_2, \tau_1[1...])$, which concludes our proof.
  \end{proof}

  By the axiom of functional choice, the previous lemma guarantees the existence of a function $\texttt{progress} : \Pi_\varphi \to \mathcal{E}_2 \times \Pi_\varphi$ that computes an event and a successor state. We start by recalling the axiom in set-theoretic terms for clarity (the Coq implementation of the proof relies on a type-theoretic formulation of the same axiom).

  \begin{definition}[Functional Choice]
    Let $A$ and $B$ be two sets and $R \subseteq A \times B$ a binary relation such that $\forall a \in A, \exists b \in B, (a, b) \in R$. Then, there exists a function $f : A \to B$ such that $\forall a \in A, (a, f(a)) \in R$
  \end{definition}

  \begin{corollary}
    There exists a function $\texttt{progress}_\varphi : \Pi_\varphi \to \mathcal{E}_2 \times \Pi_\varphi$ such that for any proof state $\pi$ we have \[
      (e, \pi') = \texttt{progress}_\varphi(\pi) \implies \pi \xrightarrow{e}_\varphi \pi'
    \]
  \end{corollary}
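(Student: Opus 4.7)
The plan is to apply the axiom of Functional Choice in a completely direct way, using the Progress lemma as its hypothesis. Concretely, instantiate the axiom with $A \triangleq \Pi_\varphi$, with $B \triangleq \mathcal{E}_2 \times \Pi_\varphi$, and with the binary relation
\[
    R \triangleq \{\ (\pi, (e, \pi')) \in A \times B \mid \pi \xrightarrow{e}_\varphi \pi' \ \}.
\]
Then the hypothesis of Functional Choice, namely $\forall \pi \in A, \exists (e, \pi') \in B, (\pi, (e, \pi')) \in R$, is nothing but the statement of the Progress lemma. So the axiom yields a function $f : \Pi_\varphi \to \mathcal{E}_2 \times \Pi_\varphi$ with $(\pi, f(\pi)) \in R$ for every $\pi$, and setting $\texttt{progress}_\varphi \triangleq f$ gives exactly the desired property: if $(e, \pi') = \texttt{progress}_\varphi(\pi)$ then by definition of $R$ we have $\pi \xrightarrow{e}_\varphi \pi'$.

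There is essentially no mathematical obstacle here; the corollary is a one-line consequence of the Progress lemma once the choice axiom is in hand. The only delicate point is a formalization one: in the Coq mechanization, the set-theoretic form of Functional Choice stated above must be replaced by its type-theoretic counterpart (turning the relation $R$ into a dependent proposition and producing a dependently typed function), so some care is needed to align the sigma-type presentation of $\Pi_\varphi$ with the shape of the choice axiom used in the development. Mathematically, however, no further work is required beyond quoting the Progress lemma.
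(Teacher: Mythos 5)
Your proof is correct and is exactly the argument the paper intends: instantiate Functional Choice with $A = \Pi_\varphi$, $B = \mathcal{E}_2 \times \Pi_\varphi$, and the graph of the proof-step relation, with the Progress lemma supplying the totality hypothesis. Your remark about the type-theoretic formulation in Coq matches the paper's own caveat, so there is nothing to add.
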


  Using the \texttt{progress} function provided by this corollary, for any proof state $\pi$ we can construct a sequence $\texttt{witness}(\pi) \in \mathcal{E}_2$ defined coinductively as follows: \begin{align*}
    \texttt{witness}(\pi) &\triangleq \textbf{let} \ (e, \pi') = \texttt{progress}(\pi) \ \textbf{in} \ e \cdot \texttt{witness}(\pi')
  \end{align*}

  By construction, the $\texttt{witness}$ function satisfies the following properties.

  \begin{lemma}[Witness]
    Let $\pi = (s_1, s_2, \tau_1) \in \Pi_\varphi$ and let $\tau_2 = \texttt{witness}(\pi)$. Then $\tau_2 \in \mathit{Traces}(s_2)$ and $(\tau_1, \tau_2) \in \LTLsquare\varphi$
  \end{lemma}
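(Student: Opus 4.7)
The plan is to prove both conclusions by coinduction, generalized over an arbitrary $\pi \in \Pi_\varphi$ rather than over the specific starting proof state. The key observation is that $\texttt{witness}$ is defined coinductively, and each unfolding $\texttt{witness}(\pi) = e_2 \cdot \texttt{witness}(\pi')$ — with $(e_2, \pi') = \texttt{progress}(\pi)$ — carries exactly the information needed, via the proof-step relation, to discharge one layer of the coinductive obligations for both $\mathit{Traces}(s_2)$ and $\LTLsquare\varphi$.

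First, I would show $\texttt{witness}(\pi) \in \mathit{Traces}(s_2)$ using the coinductive characterization $\sigma \in \mathit{Traces}(s) \iff \exists s'. \iosteps{s}{\sigma[0]}{s'} \wedge \sigma[1\ldots] \in \mathit{Traces}(s')$. Unfolding one step of $\texttt{witness}$, the corollary on $\texttt{progress}$ yields $\pi \xrightarrow{e_2}_\varphi \pi'$ with $\pi' = (s'_1, s'_2, \tau'_1) \in \Pi_\varphi$; inverting the proof-step relation extracts $\iosteps{s_2}{e_2}{s'_2}$, which discharges the head obligation, and the coinductive hypothesis applied to $\pi'$ handles the tail. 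Crucially, $\Pi_\varphi$-membership is preserved by $\texttt{progress}$ by construction, so the invariant propagates.

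The proof of $(\tau_1, \texttt{witness}(\pi)) \in \LTLsquare\varphi$ proceeds in lockstep. Using the coinductive characterization $\LTLsquare\varphi = \varphi \wedge \LTLnext \LTLsquare\varphi$, one unfolding of $\texttt{witness}$ together with the proof-step relation gives $\tau_1 = e_1 \cdot \tau'_1$, $\texttt{witness}(\pi) = e_2 \cdot \texttt{witness}(\pi')$, and $(e_1, e_2) \in \varphi$. This discharges the head obligation, and the coinductive hypothesis applied to $\pi' \in \Pi_\varphi$ discharges $(\tau'_1, \texttt{witness}(\pi')) \in \LTLsquare\varphi$. The two conclusions can be packaged as a single coinductive lemma parameterized by $\pi$, or proven separately; in either case the structure of the argument is identical.

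The main obstacle I anticipate is the guardedness condition that the Coq kernel requires when mixing a coinductively-defined object ($\texttt{witness}$) with coinductive proofs about it. The recursive call to the coinduction hypothesis must appear strictly under the constructor that unfolds both $\texttt{witness}$ and the coinductive predicate one step, and the invariant "$\pi \in \Pi_\varphi$" must be threaded through the recursion so that $\texttt{progress}$ remains applicable. The mechanization most likely uses \texttt{paco}, which sidesteps syntactic guardedness by working with parameterized greatest fixed points: the proof then reduces to showing that a single unfolding of the relevant functor can be discharged using a coinduction hypothesis available for $\pi'$, which is exactly what the above argument provides.
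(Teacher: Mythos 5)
Your proposal is correct, but it takes a different route from the paper. The paper's proof unrolls $\texttt{progress}$ into an explicit infinite sequence of proof states $(s^0_1, s^0_2, \tau^0_1) \xrightarrow{e^0_2}_\varphi (s^1_1, s^1_2, \tau^1_1) \xrightarrow{e^1_2}_\varphi \cdots$ and then reads off, for each index $i$, the facts $\iosteps{s^i_2}{e^i_2}{s^{i+1}_2}$ and $(\tau^i_1[0], e^i_2) \in \varphi$; this matches the paper's definitions directly, since $\mathit{Traces}(s_2)$ is defined there as the existence of a state sequence with a per-index transition condition and $\LTLsquare\varphi$ as a $\forall i$ statement. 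You instead prove both conclusions by coinduction on coinductive reformulations of the two target predicates ($\sigma \in \mathit{Traces}(s) \iff \exists s'.\ \iosteps{s}{\sigma[0]}{s'} \wedge \sigma[1\ldots] \in \mathit{Traces}(s')$ and $\LTLsquare\varphi$ as $\varphi$ now and $\LTLsquare\varphi$ next), generalized over $\pi \in \Pi_\varphi$. Both arguments work and extract the same content from one unfolding of $\texttt{witness}$ plus an inversion of $\xrightarrow{-}_\varphi$; your version is the more directly mechanizable one (and is likely close to what the Coq development does with \texttt{paco}), while the paper's version avoids any detour through alternative characterizations. The one thing you should make explicit is that your coinductive characterizations must be reconciled with the paper's actual definitions: for $\LTLsquare$ this is routine, but for $\mathit{Traces}$ the direction from the coinductive unfolding back to the existence of an infinite state sequence itself requires a construction --- here it is unproblematic precisely because the state sequence can be read off deterministically by iterating $\texttt{progress}$, which is in effect the paper's argument resurfacing inside yours.
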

  \begin{proof}
    Both facts follow from the definition of the proof-step relation $\to_\varphi$.
    Indeed, if we have a sequence of proof states \[
      (s^0_1, s^0_2, \tau^0_1) \xrightarrow{e^0_2}_\varphi (s^1_1, s^1_2, \tau^1_1) \xrightarrow{e^1_2}_\varphi (s^2_1, s^2_2, \tau^2_1) \ \ldots
    \] then in particular $\forall i \in \mathbb{N}, \iosteps{s^i_2}{e^i_2}{s^{i + 1}_2}$ (thus proving that $\tau_2$ is a valid trace), and $\forall i \in \mathbb{N}, (\tau^i_1[0], e^i_2) \in \varphi$ (thus proving that $(\tau_1, \tau_2) \in \LTLsquare\varphi$).
  \end{proof}

  We can now easily prove the soundness of $\fei_\varphi$ (\Cref{thm:fei-sound}).
  \begin{proof}
    Suppose $(s_1, s_2) \in \fei_\varphi$, and let $\tau_1 \in \mathit{Traces}(s_1)$, we show \[
      \exists \tau_2 \in \mathit{Traces}(s_2), (\tau_1, \tau_2) \in \LTLsquare \varphi
    \]
    We pose $\pi \triangleq (s_1, s_2, \tau_1)$. Clearly, $\pi \in \Pi_\varphi$. We then choose $\tau_2 = \texttt{witness}(\pi)$. By the previous lemma, $\tau_2 \in \mathit{Traces}(s_2)$ and $(\tau_1, \tau_2) \in \LTLsquare\varphi$.
    If we consider $s_1 = \mathcal{I}_1$ and $s_2 = \mathcal{I}_2$, the above construction gives a proof of $\models \forall^{\mathit{TS}_1}\exists^{\mathit{TS}_2} \LTLsquare \varphi$.
  \end{proof}
  \section{From Relational Invariants to Safety Relations}
  \label{sec:fe}

  \newcommand{\fef}{\texttt{feF}}
  \newcommand{\fe}{\texttt{fe}}

  While formulas of the form $\forall\exists\LTLsquare\varphi$ are already covering relevant hyperproperties such as trace-inclusion and several variants of generalized non-interference,
  many temporal hyperproperties of interest are of the more general form $\forall\exists\psi$ where $\psi$ is an arbitrary \textit{safety relation} between traces.
  In this section, we extend the coinductive relation $\fei$ to support 
  reasoning about arbitrary safety relations. Further, we develop reasoning rules
  for the specific case where safety relations are defined using LTL-style temporal modalities.

  \subsection{It's All About Staying Safe}
  
  Intuitively, safety relations are relations that specify the absence of bad interactions between two traces. More formally, safety relations between infinite traces are exactly these for which it suffices to look at finite prefixes to determine whether the relation is violated. In other words, for every safety relation there exists a so called \textit{monitor} that, given a pair of trace prefixes, checks wether or not they are compatible with the relation. Further, two traces satisfy a safety relation if and only if all their prefixes are accepted by the monitor.
  Following this intuition, establishing a safety property $\psi$ can always be reduced to a proof of invariance: it suffices to run the system and the monitor for $\psi$ concurrently, and to prove the invariant "the monitor never reports any violation".
  However, extracting an appropriate monitor for any arbitrary safety relation $\psi$ is non-trivial, especially when the relation $\psi$ is expressed in an expressive temporal logic. One way to systematically extract a monitor from a high-level logical specification
  is by computing a finite-word \textit{safety automaton} that accepts all pairs of prefixes
  compatible with a trace relation.
  While this approach works well for automated verification (because verification can then be reduced to efficient operations on automata), it it not a satisfactory solution for interactive verification inside a proof assistant. First of all, translating a high-level specification expressed in a temporal logic to an equivalent automaton is a non-trivial task. Further, we would also need to formally verify the correctness of this translation.
  Finally, even assuming that we have a certified algorithm to translate high-level specifications to automata \cite{10.1007/978-3-642-39799-8_31}, we would then need to reason about states and transitions of these automata rather than reasoning about the original logical specifications. This would lead to tedious and unintuitive proofs.
  Instead, we propose a solution via the notion of \textit{property derivatives} \cite{derivatives, 10.1007/978-3-319-15579-1_22}.
  The key insight is to observe that given a safety relation $\psi$ and two traces $\tau_1, \tau_2$, there is a systematic way to decompose a proof of $(\tau_1, \tau_2) \in \psi$ into a condition on $\tau_1[0], \tau_2[0]$ expressing \textit{what needs to hold now} in order for $\psi$ to not be immediately violated; and 
  a proof of $(\tau_1[1...], \tau_2[1...]) \in \psi'$ for some carefully chosen $\psi'$ expressing what need to hold \textit{next} in order for $\psi$ to not be violated later.
  In this section, we exploit this observation to extend the coinductive relation $\fei$ presented in the previous section to a more general relation $\fe$ that supports reasoning about arbitrary safety relations.

  \subsection{Coinductive Safety Proofs}

  Formally, safety relations between traces can be defined in terms of a \textit{safety closure} operator $|-|_\mathit{safe} : \functor{\tracerel{\mathcal{E}_1}{\mathcal{E}_2}}$.
  Given a trace relation $\psi$, the safety closure $|\psi|_\mathit{safe}$ is extracting the \textit{safety} part of $\psi$ by relating all pairs of traces such that all their prefixes are \textit{compatible} with $\psi$ (i.e., there exists a way to extend the prefixes into infinite traces that satisfy $\psi$).
  
  \begin{definition}[Safety closure]
    Let $\psi \subseteq \mathcal{E}^\omega_1 \times \mathcal{E}^\omega_2$ be a binary trace relation.
    The safety closure of $\psi$, noted $|\psi|_\mathit{safe}$ is the binary trace relation defined as follows: \[
      |\psi|_\mathit{safe} = \{ (\tau_1, \tau_2) \mid \forall n, \exists \tau'_1, \tau'_2 \in \mathcal{E}^\omega_1 \times \mathcal{E}^\omega_2, (\tau_1[n...] \cdot \tau'_1, \tau_2[n..] \cdot \tau'_2) \in \psi \}
    \]
  \end{definition}

  Safety properties are then defined to be all fixed points of $|-|_\mathit{safe}$, i.e., all the trace relations that are equivalent to their safety closure. Since $\psi \subseteq |\psi|_\mathit{safe}$ for any $\psi$, safety relations can be equivalently characterized as prefixed points of $|-|_\mathit{safe}$.

  \begin{definition}[Safety relations]
    A trace relation $\psi$ is said to be a \textit{safety} relation iff $|\psi|_\mathit{safe} \subseteq \psi$
  \end{definition}

  We observe that the safety closure of a relation can equivalently be defined as a greatest fixed point. We start by introducing an operator $\Delta$, inspired by the notion of \textit{derivatives} introduced by Brzozowski \cite{derivatives}. Given two events $e_1 \in \mathcal{E}_1, e_2 \in \mathcal{E}_2$, the derivative of $\psi$, noted $\Delta_{e_1, e_2}(\psi)$, is the relation that should be satisfied by suffixes $\tau_1 \in \mathcal{E}^\omega_1, \tau_2 \in \mathcal{E}^\omega_2$ in order for the traces $(e_1\tau_1, e_2\tau_2)$ to satisfy $\psi$. Formally, we have: \[
      \Delta_{e_1, e_2}(\psi) \triangleq \{ \ (\tau_1, \tau_2) \mid (e_1\tau_1, e_2\tau_2) \in \psi \ \}
  \]

  Two traces $\tau_1 = e_1\tau'_1$ and $\tau_2 = e_2t'_2$ are in the safety closure of $\varphi$ if the derivative $\Delta_{e_1, e_2}(\psi)$ is non-empty (i.e., the first two symbols are not trivially violating the relation), and if the suffixes $\tau'_1, \tau'_2$ are again in the closure of $\Delta_{e_1, e_2}(\psi)$.
  This intuition is formally captured by the definition of a ternary coinductive relation
  $\texttt{safe} \in \mathcal{P}(\mathcal{E}^\omega_1 \times \mathcal{E}^\omega_2 \times \tracerel{\mathcal{E}_1}{\mathcal{E}_2})$ \begin{align*}
    \texttt{safeF}(C) &\triangleq \{ (e_1\tau_1, e_2\tau_2, \psi) \mid \Delta_{e_1, e_2}(\varphi) \ne \emptyset \wedge (\tau_1, \tau_2, \Delta_{e_1, e_2}(\psi)) \in C \}\\
    \texttt{safe} &\triangleq \nu . \texttt{safeF}
  \end{align*}
  The relation $\texttt{safe}$ precisely characterizes the safety closure in the sense of the following lemma:

  \begin{lemma}
    \label{thm:cosafe}
    $(\tau_1, \tau_2) \in |\psi|_\mathit{safe} \iff (\tau_1, \tau_2, \psi) \in \texttt{safe}$
  \end{lemma}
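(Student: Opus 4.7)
The plan is to prove the two inclusions separately, exploiting the coinductive nature of $\texttt{safe}$ on one side and the universal-existential shape of $|-|_\mathit{safe}$ on the other. The bridge between the two views is the algebraic fact that iterated derivatives along a finite prefix capture exactly the pairs of extensions that complete that prefix into a member of $\psi$; more precisely, for every pair of events $e_1, e_2$, we have $\Delta_{e_1,e_2}(\psi) \ne \emptyset$ iff there exist $\tau'_1, \tau'_2$ with $(e_1\tau'_1, e_2\tau'_2) \in \psi$.

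For the direction $|\psi|_\mathit{safe} \subseteq \texttt{safe}$, I would proceed by coinduction (\Cref{thm:coinduction}) with the candidate relation
$$R \triangleq \{ \, (\tau_1, \tau_2, \psi) \mid (\tau_1, \tau_2) \in |\psi|_\mathit{safe} \, \}.$$
To check $R \subseteq \texttt{safeF}(R)$, fix $(e_1\tau_1, e_2\tau_2, \psi) \in R$. Non-emptiness of $\Delta_{e_1, e_2}(\psi)$ is obtained by instantiating the safety-closure hypothesis at $n = 1$, which provides extensions witnessing that some pair lies in the derivative. For the recursive obligation $(\tau_1, \tau_2, \Delta_{e_1,e_2}(\psi)) \in R$, I need $(\tau_1, \tau_2) \in |\Delta_{e_1,e_2}(\psi)|_\mathit{safe}$; unfolding both the safety closure and the derivative reduces this to the safety closure of $\psi$ applied at shifted indices $n+1$, which holds by assumption.

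For the converse $\texttt{safe} \subseteq |-|_\mathit{safe}$, I would prove by induction on $n \in \mathbb{N}$ that if $(\tau_1, \tau_2, \psi) \in \texttt{safe}$ then the $n$-th prefix condition in the definition of $|\psi|_\mathit{safe}$ is met. The base case $n = 0$ is trivial (any pair of infinite extensions works). For the inductive step, unfold the fixed-point equation $\texttt{safe} = \texttt{safeF}(\texttt{safe})$ at the head events $e_1 = \tau_1[0]$, $e_2 = \tau_2[0]$: this yields both the non-emptiness of $\Delta_{e_1,e_2}(\psi)$ and that $(\tau_1[1...], \tau_2[1...], \Delta_{e_1,e_2}(\psi)) \in \texttt{safe}$. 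Applying the induction hypothesis to this suffix at level $n - 1$ produces extensions $(\tau'_1, \tau'_2)$ with the suffix prefix in the derivative; prepending $e_1$ and $e_2$ and using the equivalence above yields the required extensions at level $n$ for $\psi$.

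The main obstacle is keeping the bookkeeping clean between the two representations of "compatible with $\psi$": the global view of $|\psi|_\mathit{safe}$ speaks of prefixes and extensions simultaneously for all $n$, while $\texttt{safe}$ peels off one event at a time and shifts $\psi$ to its derivative. Framing the finite-prefix-to-derivative correspondence as a standalone auxiliary lemma (by induction on $n$) before tackling either direction of the main statement will keep both proofs short and largely mechanical.
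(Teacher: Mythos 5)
Your proposal matches the paper's proof essentially step for step: the forward direction is proved by coinduction with the candidate $R = \{\,(\tau_1,\tau_2,\psi) \mid (\tau_1,\tau_2) \in |\psi|_\mathit{safe}\,\}$, using the equivalence $\Delta_{e_1,e_2}(\psi) \ne \emptyset \iff \exists \tau'_1,\tau'_2,\ (e_1\tau'_1, e_2\tau'_2) \in \psi$ together with the shift observation $(e_1\tau_1, e_2\tau_2) \in |\psi|_\mathit{safe} \implies (\tau_1,\tau_2) \in |\Delta_{e_1,e_2}(\psi)|_\mathit{safe}$, and the converse is the same induction on the prefix length $n$ with one unfolding of $\texttt{safe} = \texttt{safeF}(\texttt{safe})$ per step. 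The only nit is your base case: at $n = 0$ the obligation is that $\psi$ is non-empty, which is not true of an arbitrary pair of extensions but does follow from one unfolding of the fixed point, so the argument goes through.
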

  \begin{proof} \
    \begin{itemize}
      \item $(\Longleftarrow)$ By induction on the length of the prefixes in the definition of $|-|_\mathit{safe}$, unfolding the fixed point equation $\texttt{safe} = \texttt{safeF}(\texttt{safe})$, and
      exploiting the fact that $\Delta_{e_1, e_2}(\psi) \ne \emptyset \iff \exists (\tau_1, \tau_2), (e_1\tau_1, e_2\tau_2) \in \psi \ (*)$.
      \item $(\Longrightarrow)$ By coinduction with $R = \{ \ (\tau_1, \tau_2, \psi) \mid (\tau_1, \tau_2) \in \psi \ \}$ as the coinduction hypothesis.
      The fact that $R$ is a postfixed point of $\texttt{safeF}$ follows from $(*)$ and the additional observation that $(e_1\tau_1, e_2\tau_2) \in |\psi|_\mathit{safe} \implies (\tau_1, \tau_2) \in |\Delta_{e_1, e_2}(\psi)|_\mathit{safe}$.
    \end{itemize}
  \end{proof}

  As an immediate corollary, we can use coinduction to prove that a pair of traces satisfies any safety relation.

  \begin{corollary}[Coinductive safety proofs]
    Let $\psi$ be a safety property, then \[
      (\tau_1, \tau_2) \in \psi \iff (\tau_1, \tau_2) \in |\psi|_\mathit{safe} \iff (\tau_1, \tau_2, \psi) \in \texttt{safe}
    \]
  \end{corollary}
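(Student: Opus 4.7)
The plan is to observe that this corollary is essentially a chaining of two equivalences, each of which reduces to facts already established (or implicit in the definitions just introduced).

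For the second $\iff$, namely $(\tau_1, \tau_2) \in |\psi|_\mathit{safe} \iff (\tau_1, \tau_2, \psi) \in \texttt{safe}$, there is nothing to do: this is exactly the statement of \Cref{thm:cosafe} applied to the pair $(\tau_1, \tau_2)$ and the relation $\psi$. So the only real work lies in the first $\iff$.

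For the first $\iff$, namely $(\tau_1, \tau_2) \in \psi \iff (\tau_1, \tau_2) \in |\psi|_\mathit{safe}$, I would handle the two directions separately. The forward direction is the general inclusion $\psi \subseteq |\psi|_\mathit{safe}$, which the paper has already flagged: given $(\tau_1, \tau_2) \in \psi$, for every index $n$ the pair itself witnesses the existential in the definition of $|\psi|_\mathit{safe}$ (one just extends each prefix by the corresponding suffix of the original trace, recovering $(\tau_1, \tau_2)$ itself). The reverse direction is precisely the assumption that $\psi$ is a safety relation: by \textbf{Definition} of safety relations, $|\psi|_\mathit{safe} \subseteq \psi$, and this inclusion applied pointwise gives $(\tau_1, \tau_2) \in |\psi|_\mathit{safe} \implies (\tau_1, \tau_2) \in \psi$.

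There is no real obstacle here: the corollary is an immediate consequence of (i) the characterization of safety relations as fixed points of $|-|_\mathit{safe}$ and (ii) the coinductive characterization of the safety closure established in \Cref{thm:cosafe}. The only subtlety worth being explicit about in the write-up is that the safety hypothesis on $\psi$ is used in exactly one place, the $\Longleftarrow$ half of the first equivalence; the other three implications hold unconditionally.
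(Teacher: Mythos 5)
Your proposal is correct and matches the paper's (implicit) argument exactly: the paper presents this as an immediate corollary obtained by chaining the unconditional inclusion $\psi \subseteq |\psi|_\mathit{safe}$ with the defining inclusion $|\psi|_\mathit{safe} \subseteq \psi$ of safety relations for the first equivalence, and invoking \Cref{thm:cosafe} for the second. Your added remark that the safety hypothesis is used only in one of the four implications is accurate and a nice clarification.
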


  A useful interpretation of this corollary is that, given two traces $\tau_1, \tau_2$ and a safety property $\psi$, the derivative operator gives a systematic way to split the proof of $(\tau_1, \tau_2) \in \psi$ into \textit{what needs to hold now} (i.e., $\Delta_{\tau_1[0], \tau_2[0]}(\psi) \ne \emptyset$) and \textit{what needs to hold next} (i.e., $(\tau_1[1...], \tau_2[1...], \Delta_{\tau_1[0], \tau_2[0]}(\psi)) \in \texttt{safe}$). The next subsection builds on this intuition to construct a coinductive relation $\fe$ that extends $\fei$ with support for reasoning about arbitrary safety relations.

  \subsection{A Coinductive Relation Supporting Safety Specifications}

  For the rest of this section, we fix two LTSs $\mathit{TS}_1 = (\mathcal{S}_1, \mathcal{E}_1, \mathcal{I}_1, \to_1)$ and $\mathit{TS}_2 = (\mathcal{S}_2, \mathcal{E}_2, \mathcal{I}_2, \to_2)$.
  To extend $\fei$ to the more general case of arbitrary safety relations, we replace the functor $\feif_{\varphi} : \mathcal{P}(\mathcal{S}_1 \times \mathcal{S}_2) \xrightarrow{\mathit{mon}} \mathcal{P}(\mathcal{S}_1 \times \mathcal{S}_2)$, with a new functor $\fef$ that has the following signature: \[
    \fef : \functor{\mathcal{P}(\mathcal{S}_1 \times \mathcal{S}_2 \times \mathcal{P}(\mathcal{E}^\omega_1 \times \mathcal{E}^\omega_2))}
  \]
  The associated greatest fixed point $\fe \triangleq \nu . \fef$ relates two states and a trace relation. The intention is that for any safety relation $\psi$, a proof of $(s_1, s_2, \psi) \in \fe$ should guarantee that $(s_1, s_2) \in \forall^{\mathit{TS}_1}\exists^{\mathit{TS}_2}\psi$.
  Importantly, with this more general signature, the trace relation $\psi$ can change over the course of a coinductive proof.
  This will allow us to use the notion of derivatives to \textit{unroll} $\psi$.
  We start by defining an operator $\texttt{next}^R_{e_1, e_2}(\psi) \subseteq \mathcal{S}_1 \times \mathcal{S}_2$: \[
    \texttt{next}^R_{e_1, e_2}(\psi) \triangleq \{ \ (s_1, s_2) \mid \Delta_{e_1, e_2}(\psi) \ne \emptyset \wedge (s_1, s_2, \Delta_{e_1, e_2}(\psi)) \in R \ \}
  \]
  We use \texttt{next} to define the coinductive relation $\fe$: \begin{align*}
    \fef(R) &\triangleq \{ \ (s_1, s_2, \psi) \mid
   \forall \iosteps{s_1}{e_1}{s'_1}, \exists \iosteps{s_2}{e_2}{s'_2}, (s'_1, s'_2) \in \texttt{next}^R_{e_1, e_2}(\psi) \ \}\\
    \fe &\triangleq \nu . \fef
  \end{align*}

  \begin{theorem}[Soundness of $\fe$]
    \label{thm:fe}
    Let $\psi$ be a safety relation, \[
      (\mathcal{I}_1, \mathcal{I}_2, \psi) \in \fe \implies \models \forall^{\mathit{TS}_1}\exists^{\mathit{TS}_2}\psi
    \]
  \end{theorem}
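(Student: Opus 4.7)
The plan is to mimic the proof of \Cref{thm:fei-sound}, the essential new twist being that each proof state must also carry the \emph{current} trace relation, which evolves step by step via Brzozowski-style derivatives. Fix an arbitrary $\tau_1 \in \mathit{Traces}(\mathcal{I}_1)$ and assume $(\mathcal{I}_1, \mathcal{I}_2, \psi) \in \fe$; the goal is to construct $\tau_2 \in \mathit{Traces}(\mathcal{I}_2)$ with $(\tau_1, \tau_2) \in \psi$. First, I would generalize the set of proof states of \Cref{sec:fei-proof} to
\[
  \Pi \triangleq \{\, (s_1, s_2, \tau_1, \psi') \mid \tau_1 \in \mathit{Traces}(s_1) \wedge (s_1, s_2, \psi') \in \fe \,\},
\]
with a labeled proof-step relation $(s_1, s_2, e_1 \cdot \tau'_1, \psi') \xrightarrow{e_2} (s'_1, s'_2, \tau'_1, \Delta_{e_1, e_2}(\psi'))$ whenever $\iosteps{s_1}{e_1}{s'_1}$, $\iosteps{s_2}{e_2}{s'_2}$, and $\Delta_{e_1, e_2}(\psi') \neq \emptyset$.

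Next, I would prove a progress lemma for this relation: from $\tau_1 \in \mathit{Traces}(s_1)$ some $s'_1$ satisfies $\iosteps{s_1}{\tau_1[0]}{s'_1}$ with $\tau_1[1\ldots] \in \mathit{Traces}(s'_1)$; unfolding $\fe = \fef(\fe)$ and specializing its universal quantifier to this transition produces $e_2$ and $s'_2$ with $\iosteps{s_2}{e_2}{s'_2}$ and $(s'_1, s'_2) \in \texttt{next}^\fe_{\tau_1[0], e_2}(\psi')$, which by definition of \texttt{next} unfolds to both $\Delta_{\tau_1[0], e_2}(\psi') \neq \emptyset$ and $(s'_1, s'_2, \Delta_{\tau_1[0], e_2}(\psi')) \in \fe$, so the resulting tuple lies again in $\Pi$. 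As in the proof of \Cref{thm:fei-sound}, functional choice then extracts a function $\texttt{progress} : \Pi \to \mathcal{E}_2 \times \Pi$, from which I define $\texttt{witness}(\pi)$ coinductively and set $\tau_2 \triangleq \texttt{witness}(\mathcal{I}_1, \mathcal{I}_2, \tau_1, \psi)$; by construction $\tau_2 \in \mathit{Traces}(\mathcal{I}_2)$.

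It remains to show $(\tau_1, \tau_2) \in \psi$, which is the only genuinely new step relative to the $\fei$ proof. Because $\psi$ is a safety relation, $\psi \equiv |\psi|_\mathit{safe}$, and by \Cref{thm:cosafe} it suffices to prove $(\tau_1, \tau_2, \psi) \in \texttt{safe}$. I would do this by coinduction, using the hypothesis
\[
  R \triangleq \{\, (\tau_1, \texttt{witness}(\pi), \psi') \mid \pi = (s_1, s_2, \tau_1, \psi') \in \Pi \,\}.
\]
Given any $(\tau_1, \texttt{witness}(\pi), \psi') \in R$, one unfolding of \texttt{witness} exposes a proof-step $\pi \xrightarrow{e_2} \pi'$ whose very definition supplies both clauses of \texttt{safeF}: non-emptiness of $\Delta_{\tau_1[0], e_2}(\psi')$ (explicitly recorded on the step), and membership of $(\tau_1[1\ldots], \texttt{witness}(\pi'), \Delta_{\tau_1[0], e_2}(\psi'))$ in $R$ (witnessed by $\pi' \in \Pi$). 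Hence $R \subseteq \texttt{safeF}(R)$ and the coinduction principle concludes.

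The main obstacle is precisely Step~4: one must arrange matters so that the derivative appearing inside \texttt{safeF} exactly matches the relation stored in the next proof state, which is what forces $\Pi$ to track an evolving $\psi'$ instead of the fixed $\psi$, and what couples the witness-extraction construction to the safety-coinduction step. Apart from this synchronization, the proof is a careful but routine lifting of the $\fei$ argument, the heavy lifting having already been done by \Cref{thm:cosafe}, which recasts the safety-closure condition as a coinductive relation amenable to the same choice-based construction used for $\fei_\varphi$.
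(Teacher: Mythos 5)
Your proposal is correct and follows essentially the same route as the paper: the same generalized proof states $(s_1, s_2, \tau_1, \psi')$ with a derivative-updating proof-step relation, functional choice to extract a \texttt{progress} function and a coinductively defined \texttt{witness}, and a final appeal to \Cref{thm:cosafe} to reduce $(\tau_1,\tau_2) \in \psi$ to membership in \texttt{safe}. In fact you supply more detail than the paper does, in particular the explicit coinduction hypothesis $R$ for the \texttt{safe} step, which the paper leaves as ``easily proven by coinduction.''
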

  \begin{proof}
    The proof follows exactly the same structure as for $\fei$, we just slightly change the abstract transition system. We extend the set of proof states to be $\Pi \triangleq \{ \ (s_1, s_2, \tau_1, \psi) \mid \tau_1 \in \mathit{Traces}(s_1) \wedge (s_1, s_2, \psi) \in \fe \ \}$ and the transition relation is defined as follows: \begin{mathpar}
      \inferrule{ \tau_1 = e_1\tau'_1 \\ \iosteps{s_1}{e_1}{s'_1} \\ \iosteps{s_2}{e_2}{s'_2} \\ \Delta_{e_1, e_2}(\psi) \ne \emptyset }{ (s_1, s_2, \tau_1, \psi) \xrightarrow{e_2}_\mathit{proof} (s'_1, s'_2, \tau'_1, \Delta_{e_1, e_2}(\psi))}
    \end{mathpar}
    Given two states and a safety relation $\psi$ such that $(s_1, s_2, \psi) \in \fe$,
    and provided $\tau_1 \in \mathit{Traces}$, we can execute the abstract transition system from the initial proof state $(s_1, s_2, \tau_1)$ to obtain a trace $\tau_2$.
    We pick $\tau_2$ as a witness for the existential quantifier. It remains to prove that $(\tau_1, \tau_2) \in \psi$. Since $\psi$ is a safety relation, by \ref{thm:cosafe} it suffices to show $(\tau_1, \tau_2, \psi) \in \texttt{safe}$.
    This fact is easily proven by coinduction (exploiting $(s_1, s_2, \psi) \in \fe$ and the fact that $\tau_2$ is generated by executing the abstract transition system defined by $\to_\mathit{proof}$).
  \end{proof}

  As for $\fei$, the definition of $\fe$ as a greatest fixed point immediately gives us the ability to perform incremental proofs by parameterized coinduction.
  To use $\fe$ in incremental proofs, we redefine our semantic quadruples using $\fef$ instead of $\feif$ as the underlying functor: \begin{align*}
    \quadruple{\fbox{$H$}}{s_1}{s_2}{\psi} &\triangleq (s_1, s_2, \psi) \in G_{\fef}(H)\\
    \quadruple{\dbox{$H$}}{s_1}{s_2}{\psi} &\triangleq (s_1, s_2, \psi) \in H \cup G_{\fef}(H)
  \end{align*}

  Note that here, $\psi$ is not just a relation between events, but a relation between infinite traces.
  For convenience, we also introduce the following sextuple: \[
    \quadruple{\fbox{$H$}}{e_1 \triangleright s_1}{e_2 \triangleright s_2}{\psi} \triangleq (s_1, s_2) \in \texttt{next}^{H \ \cup \ G_{\fef}(H)}_{e_1, e_2}(\psi)
  \]

  The corresponding reasoning rules are presented in \Cref{fig:fe_core}.

  \begin{figure}[h!]
      \begin{mathpar}
        \inferrule[Init]{ \psi \ \textrm{is a safety relation} \\ \quadruple{\ \fbox{$\emptyset$}}{\mathcal{I}_1}{\mathcal{I}_2}{\psi}}{ \models \forall^{\mathit{TS}_1}\exists^{\mathit{TS}_2}\psi } \qquad
        \inferrule[Cycle]{ (s_1, s_2, \psi) \in H }{ \quadruple{\ \dbox{$H$}}{s_1}{s_2}{\psi} }\\
        \inferrule[Step]{\forall \iosteps{s_1}{e_1}{s'_1}, \exists \iosteps{s_2}{e_2}{s'_2}, \quadruple{\ \fbox{$H$}}{ e_1 \triangleright s_1'}{ e_2 \triangleright s_2'}{\psi} }{ \quadruple{\ H^{?}}{s_1}{s_2}{\psi} }\\
        \inferrule[Invariant]{ (s_1, s_2, \psi) \in H' \\ \forall (s_1, s_2, \psi) \in H', \quadruple{\ \fbox{$H \cup H'$}}{s_1}{s_2}{\psi} }{ \quadruple{\ H^{?}}{s_1}{s_2}{\psi} }\\
        \inferrule[Deriv]{ \Delta_{e_1, e_2}(\psi) \ne \emptyset \\ \quadruple{\ \dbox{$H$}}{s_1}{s_2}{\Delta_{e_1, e_2}(\psi)}}{ \quadruple{\ \fbox{$H$}}{e_1 \triangleright s_1}{e_2 \triangleright s_2}{\psi} }
      \end{mathpar}
      \caption{Incremental reasoning rules for $\fe$}
      \label{fig:fe_core}
      \Description{Incremental reasoning rules for $\fe$}
  \end{figure}
  
  The rules \textsc{Init}, \textsc{Cycle}, \textsc{Step}, and \textsc{Invariant} are similar to the ones of \Cref{fig:fei_core}. The main difference is that the \textsc{Step} rule \textit{suspends} the proof temporarily by requiring us to check whether the choice of transition $\iosteps{s_2}{e_2}{s'_2}$ in reaction to a transition $\iosteps{s_1}{e_1}{s'_1}$ is \emph{compatible} with the safety relation $\psi$. This is reflected by a premise of the form $\quadruple{\fbox{$H$}}{e_1 \triangleright s_1}{e_2 \triangleright s_2}{\psi}$ that can be discharged using the rule \textsc{Deriv}.
  The rule \textsc{Deriv} requires us to prove two things:
  \begin{enumerate}
    \item $\psi$ should not be immediately violated by the choice of event $e_2$; formalized by $\Delta_{e_1, e_2}(\psi) \ne \emptyset$
    \item $\psi$ should not be violated later; formalized by $\quadruple{\dbox{$H$}}{s_1}{s_2}{\Delta_{e_1, e_2}(\psi)}$
  \end{enumerate}
  Note that the guard around $H$ is only released after an application of \textsc{Deriv}.
  In addition to the rules presented in \Cref{fig:fe_core}, $\fe$ also inherits the \textsc{Steps-L} and \textsc{Steps-R} rules of $\fei$ (adapted to triples $(s_1, s_2, \psi)$ in an obvious way).

  \subsection{Handling Derivatives}

  While all rules presented in \Cref{fig:fe_core} are sound, they can be difficult to use in practice because of the \textit{semantic} treatment of derivatives (as opposed to a more \textit{syntactic} one). First, the \textsc{Deriv} rule requires us to prove the non-emptiness of a trace relation. Further, after a relation $\psi$ has been derived, the proof focuses on a new relation $\Delta_{e_1, e_2}(\psi)$.
  There is \textit{a priori} no reason to believe that $\Delta_{e_1, e_2}(\psi)$ has been encountered before, and therefore, it is not obvious that it will later be possible to use the rule \textsc{Cycle} to conclude the proof.
  Fortunately, when $\psi$ is expressed using LTL modalities, there is a systematic way to determine an LTL representation of $\Delta_{e_1, e_2}(\psi)$ by recursing on the syntactic structure of the formula. \Cref{fig:derivatives} describes how\footnote{To the best of our knowledge, the notion of derivatives has never explicitly been formalized for LTL-defined languages. However, derivatives have been applied to languages over infinite words expressed as omega-regular expressions \cite{10.1007/978-3-319-15579-1_22}. Further, derivatives are easily recovered from the expansion laws of temporal modalities \cite{Baier2008PrinciplesOM}. }.

  \begin{figure}[h!]
    \begin{align*}
      \Delta_{e_1, e_2}(\LTLsquare \psi) &\equiv \Delta_{e_1, e_2}(\psi) \wedge \LTLsquare\psi\\
      \Delta_{e_1, e_2}(\psi_1 \wedge \psi_2) &\equiv \Delta_{e_1, e_2}(\psi_1) \wedge \Delta_{e_1, e_2}(\psi_2)\\
      \Delta_{e_1, e_2}(\psi_1 \vee \psi_2) &\equiv \Delta_{e_1, e_2}(\psi_1) \vee \Delta_{e_1, e_2}(\psi_2)\\
      \Delta_{e_1, e_2}(\psi_1 \ \texttt{W} \ \psi_2) &\equiv \Delta_{e_1, e_2}(\psi_2) \vee [\Delta_{e_1, e_2}(\psi_1) \wedge (\psi_1 \ \texttt{W} \ \psi_2)]\\
      \Delta_{e_1, e_2}(\bigcirc \psi) &\equiv \psi\\
      \Delta_{e_1, e_2}(\varphi) &\equiv \begin{cases}
        \begin{aligned}
          ~&\mathit{true} &&\quad\textrm{if $(e_1, e_1) \in \varphi$}\\
          ~&\mathit{false} &&\quad\textrm{otherwise}
        \end{aligned}
      \end{cases}
    \end{align*}
    \caption{Derivatives of common temporal modalities}
    \label{fig:derivatives}
    \Description{Derivatives of common temporal modalities}
  \end{figure}

  We note that for temporal modalities $\LTLsquare$ and $\texttt{W}$, the original formula occurs as a subformula in the derivative.
  In a sense, the $\Delta$ operator \textit{reveals} the coinductive nature of these modalities. This will help us closing cycles in proofs.

  \begin{example}
    We consider the following transition systems:

  \begin{center}
    \begin{tikzpicture}
      \node (X) {$\mathit{TS}_1$:};
      \node[initial, state, below right = of X, yshift=.5cm] (A) {$q_0$};
      \draw[->] (A) edge[loop right] node{$a$} (A);
    \end{tikzpicture}
    \begin{tikzpicture}
      \node (X) {$\mathit{TS}_2$:};
      \node[initial, state, below right = of X, yshift=.5cm] (A) {$s_0$};
      \node[state, right = of A] (B) {$s_1$};
      \draw[->] (A) edge node{$a$} (B);
      \draw[->] (B) edge[loop right] node{$b$} (B);
    \end{tikzpicture}
  \end{center}

  \noindent We wish to prove $\models \forall^{\mathit{TS}_1}\exists^{\mathit{TS}_2}(a_1 \wedge a_2) \ \texttt{W} \ b_2$. We start by applying \textsc{Init}, followed by \textsc{Step} to match the $a$-transition in $\mathit{TS}_1$ with an $a$-transition in $\mathit{TS}_2$. \begin{align*}
    & \models \forall^{\mathit{TS}_1}\exists^{\mathit{TS}_2}(a_1 \wedge a_2) \ \texttt{W} \ b_2\\
    \Longleftarrow \quad & \quadruple{\fbox{$\emptyset$}}{q_0}{s_0}{(a_1 \wedge a_2) \ \texttt{W} \ b_2} & \textrm{\color{gray}(by \textsc{Init})}\\
    \Longleftarrow \quad & \quadruple{\fbox{$\emptyset$}}{a \triangleright q_0}{a \triangleright s_1}{(a_1 \wedge a_2) \ \texttt{W} \ b_2} & \textrm{\color{gray}(by \textsc{Step})}
  \end{align*}
  To make further progress, we need to prove that emitting two $a$'s does not immediately violate the current temporal relation. To do so, we apply the rule \textsc{Deriv} and we obtain \[
    \Delta_{a, a}((a_1 \wedge a_2) \ \texttt{W} \ b_2) \ne \emptyset \wedge \quadruple{\dbox{$\emptyset$}}{q_0}{s_1}{\Delta_{e_1, e_2}((a_1 \wedge a_2) \ \texttt{W} \ b_2)} \quad \textrm{\color{gray}(by \textsc{Deriv})}
  \]
  Using known derivatives (\Cref{fig:derivatives}), we can easily show that $\Delta_{a, a}((a_1 \wedge a_2) \ \texttt{W} \ b_2) \equiv (a_1 \wedge a_2) \ \texttt{W} \ b_2 \ne \emptyset$. After substituting the result of the derivative we are therefore left with the goal \[
      \quadruple{\dbox{$\emptyset$}}{q_0}{s_1}{(a_1 \wedge a_2) \ \texttt{W} \ b_2}
  \]
  By \textsc{Step} we match an $a$-transition in $\mathit{TS}_1$ with a $b$-transition in $\mathit{TS}_2$ and we obtain \[
      \quadruple{\fbox{$\emptyset$}}{a \triangleright q_0}{b \triangleright
    s_1}{(a_1 \wedge a_2) \ \texttt{W} \ b_2} \quad \textrm{\color{gray}(by \textsc{Step})}
  \]
  Again, by \textsc{Deriv}, we have to prove two goals: \[
      \Delta_{a, b}((a_1 \wedge a_2) \ \texttt{W} \ b_2) \ne \emptyset \wedge 
      \quadruple{\dbox{$\emptyset$}}{q_0}{s_1}{\Delta_{e_1, e_2}((a_1 \wedge a_2) \ \texttt{W} \ b_2)} \quad \textrm{\color{gray}(by \textsc{Deriv})}
  \]
  Since $\Delta_{a, b}((a_1 \wedge a_2) \ \texttt{W} \ b_2) \equiv \mathit{true} \ne \emptyset$, it suffices to prove the following: \[
    \quadruple{\dbox{$\emptyset$}}{q_0}{s_1}{ \mathit{true} }
  \]
  Using \textsc{Invariant} we set the current coinduction hypothesis to be $H \triangleq \{ \ (q_0, s_1, \mathit{true}) \ \}$ \[
      \quadruple{\fbox{$H$}}{q_0}{s_1}{ \mathit{true} } \quad \textrm{\color{gray}(by \textsc{Invariant})}
  \]
  By \textsc{Step} we again match an $a$-transition with a $b$-transition and we get \[
      \quadruple{\fbox{$H$}}{a \triangleright q_0}{b \triangleright s_1}{\mathit{true}} \quad \textrm{\color{gray}(by \textsc{Step})}
  \]
  Since the derivative of $\mathit{true}$ is $\mathit{true}$ for any pair of events, by \textsc{Deriv} we obtain \[
      \quadruple{\dbox{$H$}}{q_0}{s_1}{\mathit{true}} \quad \textrm{\color{gray}(by \textsc{Deriv})}
  \]
  Finally, since $(q_0, s_1, \mathit{true}) \in H$, we can conclude using \textsc{Cycle}.
  \qed
  \end{example}
  
  While it is correct, the proof presented above suffers from several limitations. 
  
  \paragraph*{Limitation 1} 
  Even though we purposely gave a very detailed proof, the example discussed above still seems abnormally long.
  In particular, the last application of the rule \textsc{Step} seems redundant. Indeed, if we look at the structure of the two transition systems, the proof should only have two easy steps: one to match the $a$-transition with an $a$-transition, and a second one to match an $a$-transition with a $b$-transition. More precisely, once we reached the goal $\quadruple{\fbox{$\emptyset$}}{q_0}{s_1}{\mathit{true}}$, we would like to be able to conclude right away. This goal should, in principle, immediately follow from the previous and intuitively more difficult goal $\quadruple{\fbox{$\emptyset$}}{q_0}{s_1}{(a_1 \wedge a_2) \ \texttt{W} \ b_2 }$.

  \paragraph*{Limitation 2} Another and more subtle concern is the fact that whenever we applied \textsc{Deriv}, we had to explicitly compute the derivative of the current trace relation, and substitute the result back in the current goal. Substituting a relation with an equivalent one is not an issue when doing informal reasoning in set theory;
  however, in type-theory based proof assistants such as Coq, this is not possible in general.
  By default, we can only substitute a relation for another one if both relations are equal in the sense of propositional equality.
  One solution would be to assume the axiom of \textit{predicate extensionality} that precisely stipulates that two equivalent predicates are also equal.

  \begin{axiom}[Predicate Extensionality]
    Let $P, Q \in \mathcal{P}(A)$ be predicates over $A$, then $P \equiv Q \implies P = Q$.
  \end{axiom}

  Note however that this approach is not entirely satisfactory as it 
  would make our development depends on a second axiom in addition to functional choice.

  \subsection{Strengthening Is All We Need}

  Instead of relying on additional axioms, a more general solution to address the two limitations presented in the previous section is to prove once and for all that our coinductive relation supports the following \textit{strengthening} rule: \begin{mathpar}
    \inferrule[Strengthen]{\psi' \subseteq \psi \\ \quadruple{H^?}{s_1}{s_2}{\psi'}}{\quadruple{H^?}{s_1}{s_2}{\psi}}
  \end{mathpar}
  Note that the two question marks in \textsc{Strengthen} should be interpreted as follows: the rule can be applied both when $H$ is guarded or unguarded, but it never modifies the guard (in particular, if the hypothesis was guarded, it remains guarded).
  The addition of \textsc{Strengthen} would clearly solve the first limitation we discussed: it can factor redundant proof steps by reusing results already established for stronger trace relations. Further, it also solves the substitution problem. Indeed, substituting a trace relation with an equivalent one is just a special case of strengthening.

  Unfortunately, with our current definition of semantic quadruples, \textsc{Strengthen} is unsound. To understand why, let us first observe that the following weaker version of the rule \textit{is} sound: \begin{mathpar}
    \inferrule{\psi' \subseteq \psi \\ \quadruple{\fbox{$\emptyset$}}{s_1}{s_2}{\psi'}}{\quadruple{H^?}{s_1}{s_2}{\psi}}
  \end{mathpar}
  Note that in this variant, $H$ is replaced by $\emptyset$ after applying the rule!
  In other words, if we need to prove $\quadruple{H^?}{s_1}{s_2}{\psi}$, it suffices to prove that the property holds for a stronger trace relation $\psi' \subseteq \psi$, but at the cost of forgetting all knowledge previously accumulated in the proof.
  This essentially ruins the benefits of having an incremental proof system.
  Intuitively the reason why we need to reset $H$ is that the triples we explore during the proof of $\quadruple{\fbox{$H$}}{s_1}{s_2}{\psi'}$ are not necessarily the same as the triples explored during a proof of $\quadruple{\fbox{$H$}}{s_1}{s_2}{\psi}$.
  In particular, instead of considering the successive derivatives of $\psi$, we are instead considering the successive derivatives of $\psi'$. In turn, if we use $H$ to reason about triples emanating from $(s_1, s_2, \psi')$, there is no reason to believe that the same reasoning carries over to triples emanating from $(s_1, s_2, \psi)$ without further assumptions.
  Fortunately, we can slightly generalize our coinductive relation in such a way that \textsc{Strengthen} (in its more general version) is provably sound, and all other proof rules remain valid as well. The next section discusses this generalization.

  For now, let us assume that \textsc{Strengthen} is sound. Using the equivalences listed in \Cref{fig:derivatives}, \textsc{Strengthen} can be combined with \textsc{Deriv} to obtain a set of rules for handling temporal modalities.
  \Cref{fig:deriv-hyco} presents a selection of such rules.

  \begin{figure}[h!]
      \begin{mathpar}
        \inferrule[Deriv-$\LTLsquare$]{(e_1, e_2) \in \varphi \\ \quadruple{\ \dbox{$H$}}{s_1}{s_2}{\LTLsquare\varphi}}{\quadruple{\fbox{$H$}}{e_1 \triangleright s_1}{e_2 \triangleright s_2}{\LTLsquare\varphi}}\qquad
        \inferrule[Deriv-$\LTLnext$]{\psi \ne \emptyset \\ \quadruple{\ \dbox{$H$}}{s_1}{s_2}{\psi}}{\quadruple{\fbox{$H$}}{e_1 \triangleright s_1}{e_2 \triangleright s_2}{\LTLnext\psi}}\\
        \inferrule[Deriv-\texttt{W}-Now]{(e_1, e_2) \in \varphi_2 \\ \quadruple{\ \dbox{$H$}}{s_1}{s_2}{\mathit{true}}}{\quadruple{\fbox{$H$}}{e_1 \triangleright s_1}{e_2 \triangleright s_2}{\varphi_1 \ \texttt{W} \ \varphi_2}} \qquad
        \inferrule[Deriv-\texttt{W}-Later]{(e_1, e_2) \in \varphi_1 \\ \quadruple{\ \dbox{$H$}}{s_1}{s_2}{\varphi_1 \ \texttt{W} \ \varphi_2}}{\quadruple{\fbox{$H$}}{e_1 \triangleright s_1}{e_2 \triangleright s_2}{\varphi_1 \ \texttt{W} \ \varphi_2}}
      \end{mathpar}
      \caption{Selection of proof rules for temporal modalities}
      \label{fig:deriv-hyco}
      \Description{Selection of proof rules for temporal modalities}
  \end{figure}

  \section{Up-to Techniques}

  \subsection{Reasoning Up-to Stronger Trace Relations}

  Often times, we have a goal of the form $\quadruple{\dbox{$H$}}{s_1}{s_2}{\psi}$ where the triple $(s_1, s_2, \psi)$ \textit{almost} belongs to $H$ (which would allow us to conclude using the rule \textsc{Cycle}), but not \textit{exactly}. In other words, we would like to be able to reason \textit{up to} slight enlargements of the coinduction hypothesis \cite{Pous_Sangiorgi_2011}.
  As discussed in the previous section, a typical occurrence of this pattern is when we have a goal of the form $\quadruple{\dbox{$H$}}{s_1}{s_2}{\psi}$ but $H$ only contains $(s_1, s_2, \psi')$ for some stronger relation $\psi' \subset \psi$.
  In \cite{up_to} and \cite{all_the_way_up}, Pous developed systematic ways to soundly integrate up-to techniques in coinductive proofs. Building on these ideas,
  Zakowski et al. later showed that the parameterized greatest fixed point operator $G$ can be generalized to support up-to techniques as well as other enhancements \cite{gpaco}.
  This generalization, usually referred to as GPACO, preserves the ability to perform incremental proofs. The reasoning rules of GPACO are presented in \Cref{fig:gpaco}. We note that the original definition by Zakowski et al. is more involved, and we refer the readers to \cite{gpaco} for a more in-depth presentation. However, our Coq development relies on an instance of GPACO which exposes exactly the simplified interface described in \Cref{fig:gpaco}.
  Contrary to standard parameterized coinduction, GPACO explicitly marks when the current assumption can be used or not: we note $\fbox{$G_F(H)$}$ when the assumption $H$ is guarded, and $\dbox{$G_F(H)$}$ when it is unguarded. 
  When proof rules can be applied regardless of whether the coinduction hypothesis is guarded or not, we note $G^?_F(H)$.

  \begin{figure}[ht!]
      \begin{mathpar}
        \inferrule[Init]{ X \subseteq \fbox{$G_F(\emptyset)$} }{X \subseteq \nu . F} \qquad
        \inferrule[Step]{ X \subseteq F(\dbox{$G_F(H)$})}{X \subseteq G^?_F(H)} \qquad
        \inferrule[Accumulate]{X \subseteq \fbox{$G_F(X \cup H)$}}{X \subseteq G^?_F(H)}\\
        \inferrule[Cycle]{X \subseteq H}{X \subseteq \dbox{$G_F(H)$}}\qquad
        \inferrule[Up-to]{ \mathit{clo} \ \textrm{is \emph{compatible} with $F$}}{\mathit{clo}(G^?_F(H)) \subseteq G^?_F(H)}
      \end{mathpar}
      \caption{Rules of GPACO}
      \label{fig:gpaco}
      \Description{Rules of GPACO}
  \end{figure}

  As for parameterized coinduction (\Cref{fig:coinduction}), the rule \textsc{Step} releases the guard. When the current hypothesis is unguarded, the rule \textsc{Cycle} can be invoked to conclude a proof.
  Finally, the rule \textsc{Up-to} enables up-to reasoning. The idea is that given an operator $\mathit{clo}$ that satisfies a certain compatibility criterion (Formally, $\mathit{clo}$ is \textit{compatible} with $F$ iff $\mathit{clo} \circ F \subseteq F \circ \mathit{clo}$~\cite{all_the_way_up}), the goal $X \subseteq G^?(H)$ can be replaced by the hopefully simpler goal $X \subseteq \mathit{clo}(G^?(H))$. Typically, we choose $\mathit{clo}$ to be a closure operator (hence the naming convention $\mathit{clo}$) so that $\mathit{clo}(G^?(H))$ is larger than $G^?(H)$.
  To support up-to techniques, we redefine our semantic tuples using GPACO as follows: \begin{align*}
    \quadruple{\fbox{$H$}}{s_1}{s_2}{\psi} &\triangleq (s_1, s_2, \psi) \in \fbox{$G_\fef(H)$}\\
    \quadruple{\dbox{$H$}}{s_1}{s_2}{\psi} &\triangleq (s_1, s_2, \psi) \in \dbox{$G_\fef(H)$}\\
    \quadruple{\fbox{$H$}}{e_1 \triangleright s_1}{e_2 \triangleright s_2}{\psi} &\triangleq (s_1, s_2) \in \texttt{next}^{\dbox{$G_\fef(H)$}}_{e_1, e_2}(\psi)
  \end{align*}

  All the reasoning rules previously introduced are preserved by this change. However, this generalization naturally inherits the \textsc{Up-to} rule. In particular, it allows us to reason up to stronger trace relations. The associated closure operator \texttt{strclo} is defined as follows: \begin{align*}
    \texttt{strclo}(H) \triangleq \{ \ (s_1, s_2, \psi) \mid \exists \psi' \subseteq \psi, (s_1, s_2, \psi') \in H \ \}
  \end{align*}

  By definition, to prove that a triple $(s_1, s_2, \psi)$ belongs to $\texttt{strclo}(H)$, it suffices to prove that $(s_1, s_2, \psi')$ is in $H$ for some relation $\psi' \subseteq \psi$.
  Importantly, we observe that $\texttt{strclo}$ is compatible with the monotone functor $\fef$. Consequently, \textsc{Strengthen} is just an instance of \textsc{Up-to} with $\fef$ as the underlying functor, and $\texttt{strclo}$ as the (compatible) closure operator.

  \subsection{Reasoning Up-to Simulation}

  To prove a universal trace property $\forall^\mathit{TS}\psi$, it suffices to show that the property hold for any superset $T \supseteq \mathit{Traces}(\mathit{TS})$. Dually, to prove an existential trace property $\exists^\mathit{TS}\psi$, it suffices to show that the property holds for a subset $T \subseteq \mathit{Traces}(\mathit{TS})$.
  In turn, to prove a $\forall\exists$ hyperproperty, it suffices to consider an over-approximation of the universally quantified traces and an under-approximation of the existentially quantified traces. In some cases, reasoning about approximated sets of traces can greatly simplify proofs.
  This intuition about traces can be lifted at the level of states if we consider simulation instead of trace inclusion.
  In particular, our coinductive relation can be equipped with the following sound rules: \begin{mathpar}
    \inferrule[Sim-L]{ (s_1, s'_1) \in \texttt{sim} \\ \quadruple{H^?}{s'_1}{s_2}{ \psi}}{ \quadruple{H^?}{s_1}{s_2}{\psi}} \qquad
    \inferrule[Sim-R]{ (s'_2, s_2) \in \texttt{sim} \\ \quadruple{H^?}{s_1}{s'_2}{ \psi}}{ \quadruple{H^?}{s_1}{s_2}{\psi}}
  \end{mathpar}
  These rules can be used at any point in a proof to replace a state with a similar state of our choice.
  The soundness of \textsc{Sim-L} and \textsc{Sim-R} is immediately obtained from \textsc{Up-to} by reasoning up to \emph{simulation}.
  We define the corresponding (compatible) closure operator as follows: \[
    \texttt{simclo}(H) \triangleq \{ \ (s_1, s_2) \mid \exists (s'_1, s'_2), (s_1, s'_1) \in \texttt{sim} \wedge (s'_2, s_2) \in \texttt{sim} \wedge (s'_1, s'_2) \in H \ \}
  \]

  We refer the readers to our Coq development for a proof that \texttt{simclo} is compatible with \texttt{feF}.
  Since simulation is just a special instance of $\fe$ with $\psi = \LTLsquare \mathit{eq}$, we note that the simulation rules can be reformulated purely in terms of our semantic quadruples as follows: \begin{mathpar}
    \inferrule[Sim-L \textrm{(internalized)}]{ \quadruple{\fbox{$\emptyset$}}{s_1}{s'_1}{ \LTLsquare \mathit{eq}} \\ \quadruple{H^?}{s'_1}{s_2}{ \psi }}{ \quadruple{H^?}{s_1}{s_2}{\psi}} \quad
    \inferrule[Sim-R \textrm{(internalized)}]{ \quadruple{\fbox{$\emptyset$}}{s'_2}{s_2}{\LTLsquare\mathit{eq}} \\ \quadruple{H^?}{s_1}{s'_2}{ \psi }}{ \quadruple{H^?}{s_1}{s_2}{\psi}}
  \end{mathpar}

  \section{From $\forall\exists$ to $\forall^*\exists^*$}

  In the previous sections, we focused exclusively on hyperproperties with one universal quantifier followed by one existential quantifier. We note that our coinductive relations can naturally be extended to support the more general case of hyperproperties of the form $\forall^*\exists^*$ (i.e., $0$ or more universal quantifiers followed by $0$ or more existential quantifiers).
  Given a family of $m + n$ transition systems $\mathit{TS}_1, \ldots, \mathit{TS}_{n + m}$ such that $\mathit{TS}_i = (\mathcal{S}_i, \mathcal{E}_i, \mathcal{I}_i, \to_i)$ for all $0 \le i \le n + m$, we define the coinductive relation $\texttt{hyco}^{m, n} \subseteq \mathcal{S}_1 \times \ldots \times \mathcal{S}_{n + m} \times \mathcal{P}(\mathcal{E}^\omega_1 \times \ldots \times \mathcal{E}^\omega_{n + m})$ as follows: \begin{equation*}
    \begin{split}
      \texttt{hycoF}^{m, n}(R) \triangleq \{ \ 
      & (s_1, \ldots s_m, s_{m + 1}, \ldots, s_{m + n}) \mid \\
      &\forall \iosteps{s_1}{e_1}{s'_1} \ldots \forall\iosteps{s_m}{e_m}{s'_m},
      \exists \iosteps{s_{m + 1}}{e_{m + 1}}{s'_{m + 1}} \ldots \exists \iosteps{s_{m + n}}{e_{m + n}}{s'_{m + n}},\\
      & \Delta_{e_1, \ldots, e_{n + m}}(\psi) \ne \emptyset \wedge (s'_1, \ldots, s'_{m + n}, \Delta_{e_1, \ldots, e_{n + m}}(\psi)) \in R
      \ \}
    \end{split}
  \end{equation*}
  The associated greatest fixed point $\texttt{hyco}^{m, n} \triangleq \nu . \texttt{hycoF}^{m, n}$ gives a sound proof technique for $\forall^*\exists^*$ hyperproperties.

  \begin{theorem}
    Let $\psi \subseteq \mathcal{P}(\mathcal{E}^\omega_1 \times \ldots \times \mathcal{E}^\omega_{n + m})$ be a $(m + n)$-ary safety relation. Then, \[
      (\mathcal{I}_1, \ldots, \mathcal{I}_{m + n}, \psi) \in \texttt{hyco}^{m, n} \implies \models \forall^{\mathit{TS}_1}\ldots\forall^{\mathit{TS}_m}\exists^{{TS}_{m + 1}}\ldots\exists^{\mathit{TS}_{m + n}}\psi
    \]
  \end{theorem}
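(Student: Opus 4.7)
The plan is to generalize the soundness proof of $\fe$ (\Cref{thm:fe}) in the natural $(m+n)$-ary way. The four-step strategy carries over unchanged: (1) define an abstract labeled transition system over proof states, (2) show it is progressive, (3) use the axiom of functional choice to extract a deterministic progress function, and (4) coinductively build the $n$ witness traces, one for each existential quantifier.

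Concretely, I would define the set of proof states as
\[
  \Pi \triangleq \{ \, (s_1, \ldots, s_{m+n}, \tau_1, \ldots, \tau_m, \psi) \mid \tau_i \in \mathit{Traces}(s_i) \text{ for all } 1 \le i \le m,\ (s_1, \ldots, s_{m+n}, \psi) \in \texttt{hyco}^{m,n} \, \}
\]
where the $\tau_i$ track the currently-considered universally-quantified traces and $\psi$ is the safety relation that must stay alive. The proof-step relation would be labeled by $n$-tuples of events $(e_{m+1}, \ldots, e_{m+n}) \in \mathcal{E}_{m+1} \times \cdots \times \mathcal{E}_{m+n}$: from a state $(s_1, \ldots, s_{m+n}, \tau_1, \ldots, \tau_m, \psi)$, I would read off the head events $e_i \triangleq \tau_i[0]$ for $i \le m$, produce the matching transitions $\iosteps{s_i}{e_i}{s'_i}$ from $\tau_i \in \mathit{Traces}(s_i)$, and then obtain the existential witnesses $\iosteps{s_j}{e_j}{s'_j}$ for $j > m$, the non-emptiness $\Delta_{e_1, \ldots, e_{m+n}}(\psi) \ne \emptyset$, and the successor hypothesis $(s'_1, \ldots, s'_{m+n}, \Delta_{e_1, \ldots, e_{m+n}}(\psi)) \in \texttt{hyco}^{m,n}$ by a single unfolding of $\texttt{hyco}^{m,n} = \texttt{hycoF}^{m,n}(\texttt{hyco}^{m,n})$. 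Progress is then immediate.

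Functional choice would yield a function $\texttt{progress} : \Pi \to (\mathcal{E}_{m+1} \times \cdots \times \mathcal{E}_{m+n}) \times \Pi$, from which I would coinductively define the $n$ witness traces $\tau_{m+1}, \ldots, \tau_{m+n}$ by projecting the event-tuple component of $\texttt{progress}$ along each coordinate and iterating. By construction each $\tau_j \in \mathit{Traces}(s_j)$ (immediate from the definition of $\to_\mathit{proof}$), and since at every step along the generated run we have $\Delta_{e_1, \ldots, e_{m+n}}(\psi) \ne \emptyset$, \Cref{thm:cosafe} together with the hypothesis that $\psi$ is a safety relation gives $(\tau_1, \ldots, \tau_{m+n}) \in \psi$. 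Instantiating with $s_i = \mathcal{I}_i$ yields the desired $\models \forall^{\mathit{TS}_1}\ldots\forall^{\mathit{TS}_m}\exists^{\mathit{TS}_{m+1}}\ldots\exists^{\mathit{TS}_{m+n}}\psi$.

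I do not anticipate a genuinely new obstacle; the proof is essentially an $(m+n)$-ary repackaging of the one for $\fe$. The only mildly delicate point is that unfolding $\texttt{hycoF}^{m,n}$ requires committing to the $m$ universal events and their transitions \emph{before} extracting the $n$ existential witnesses, so the progress lemma must thread the $e_i = \tau_i[0]$ through all $m$ universal quantifiers in the correct order. In a Coq mechanization, this is handled most cleanly either by iterated dependent destruction of the unfolded $\texttt{hycoF}^{m,n}$, or by parametrizing the development over the arities $m$ and $n$ and performing a simple recursion, so that the $\forall\exists$ case and the general $\forall^*\exists^*$ case share exactly the same skeleton.
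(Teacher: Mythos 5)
Your proposal is correct and matches the paper's intended argument: the paper gives no explicit proof for this theorem, presenting it as the natural $(m+n)$-ary generalization of \Cref{thm:fe}, whose proof-state/progress/functional-choice/witness-construction pipeline you reproduce faithfully (including the use of the $(m+n)$-ary analogue of \Cref{thm:cosafe} to conclude $(\tau_1,\ldots,\tau_{m+n}) \in \psi$ from non-emptiness of the successive derivatives). Your remark about having to commit to the $m$ universal head events before extracting the $n$ existential witnesses is exactly the right point of care, and it is handled correctly.
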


  \section{Proving Temporal Hyperproperties of Imperative Programs with I/O}

  So far, we have introduced a general framework to prove $\forall\exists$ hyperproperties with a safety relation between the universal and existential traces.
  In this section, we show that this framework naturally applies to the verification of imperative programs with inputs, outputs, and nondeterminism ($\textsc{IMP}_\mathit{io}$).

  \subsection{An Imperative Programming Language with Nondeterminism and I/O}

  $\textsc{IMP}_\mathit{io}$ programs support infinite loops, inputs, outputs, conditionals, and nondeterministic generation of constant values. The syntax is described in \Cref{fig:syntax}.

  \begin{figure}[H]
      \begin{center}
        \begin{tabular}{c}
          \\
          $P$ ::= \textbf{loop} $P$ | \textbf{if} $c$ \textbf{then} $P$ \textbf{else} $P$ | $\textbf{input}$ $x$ | \textbf{output} $e$ | $\textbf{havoc}$ $x$ | $P$ \textbf{;} $P$ | $x$ \textbf{:=} $e$\\
          \\
          $e$ ::= $x$ | $e$ \textbf{+} $e$ | $e$ \textbf{-} $e$ | $0$ | $1$ | $2$ | ... \qquad
          $c$ ::= \textbf{true} | \textbf{false} | e \textbf{<} $e$ | $e$ \textbf{=} $e$ | ...
        \end{tabular}
      \end{center}
      \caption{Syntax of $\textsc{IMP}_\mathit{io}$}
      \label{fig:syntax}
      \Description{Syntax of $\textsc{IMP}_\mathit{io}$}
  \end{figure}


  In \Cref{fig:syntax}, $x$ denotes a variable drawn from a set of program variables $\mathcal{X}$, $P$ denotes a program, $e$ an arithmetic expression, and $c$ a boolean expression.
  Given a memory $m : \mathcal{X} \to \mathbb{Z}$, we denote by $\llbracket e \rrbracket_m \in \mathbb{Z}$ (resp. $\llbracket c \rrbracket_m \in \{ \mathit{true}, \mathit{false} \}$) the value of arithmetic (resp. boolean) expressions.
  We interpret $\textsc{IMP}_\mathit{io}$ programs as labeled transition systems by giving a small-step operational semantics which, in turn, naturally induces a trace semantics.

  \begin{figure}[ht!]
    \begin{center}
      \begin{mathpar}
        \inferrule[Loop]{ }{\langle \textbf{loop} \ P, m \rangle \to \langle P ; \textbf{loop} \ P, m \rangle} \qquad
        \inferrule[Continue]{ }{\langle (P_1 \ ; \ P_2) \ ; \ P_3, m \rangle \to \langle P_1 \ ; \ (P_2 \ ; \ P_3), m \rangle}\\
        \inferrule[Input]{ v \in \mathbb{Z} }{\langle \textbf{input} \ x \ ; \ P, m \rangle \xrightarrow{\texttt{in}(v)} \langle P, m[x \gets v] \rangle} \qquad
        \inferrule[Output]{ v = \llbracket e \rrbracket_m }{\langle \textbf{output} \ e \ ; \ P, m \rangle \xrightarrow{\texttt{out}(v)} \langle P, m \rangle}\\
        \inferrule[Havoc]{ v \in \mathbb{Z} }{\langle \textbf{havoc} \ x \ ; \ P, m \rangle \to \langle P, m[x \gets v] \rangle} \qquad
        \inferrule[Assign]{ v = \llbracket e \rrbracket_m }{\langle x \ \textbf{:=} \ e \ ; \  P, m \rangle \to \langle P, m[x \gets v] \rangle}\\
        \inferrule[Ite-true]{ \llbracket c \rrbracket_m = \textbf{true} }{\langle \textbf{if} \ c \ \textbf{then} \ P_1 \ \textbf{else} \ P_2, m \rangle \to \langle P_1, m \rangle}\qquad \inferrule[Ite-false]{ \llbracket c \rrbracket_m = \textbf{false} }{\langle \textbf{if} \ c \ \textbf{then} \ P_1 \ \textbf{else} \ P_2, m \rangle \to \langle P_2, m \rangle}
      \end{mathpar}
    \end{center}
    \caption{Operational semantics of $\textsc{IMP}_\mathit{io}$}
    \Description{Operational semantics}
    \label{fig:semantics}
  \end{figure}

  We note that $\textbf{havoc}$ and $\textbf{input}$ have the same effect on the memory, but $\textbf{input}$ effects are recorded in program traces, whereas $\textbf{havoc}$ is used to model silent nondeterministic assignments and does not emit any event.
  Further, since we are considering only \textit{truly} reactive programs, we do not assign a meaning to basic instructions when they are not followed by more instructions (see rules $\textsc{Input}$, $\textsc{Output}$, $\textsc{Havoc}$, and $\textsc{Assign}$). Such programs would not have any infinite trace anyway.

  \subsection{A Deductive System for Hyperproperties of Imperative Programs}

  \newcommand{\hyco}[6]{ \quadruple{\fbox{\ensuremath{#6}}}{\ensuremath{\langle #1, #2\rangle}}{\ensuremath{\langle #3, #4 \rangle}}{\ensuremath{#5}}}
  \newcommand{\uhyco}[8]{ \quadruple{\fbox{\ensuremath{#6}}}{\ensuremath{#7 \triangleright \langle #1, #2 \rangle}}{\ensuremath{#8 \triangleright \langle #3, #4 \rangle}}{\ensuremath{#5}}}

  Let $\mathbb{P}$ be the set of all programs, $\mathcal{S} \triangleq \mathbb{P} \times \mathbb{Z}^\mathcal{X}$ be the set of all program states (pairs of a program and a memory), and $\mathcal{E} \triangleq \{ \texttt{in}(x) \mid x \in \mathbb{Z} \} \cup \{ \texttt{out}(x) \mid x \in \mathbb{Z} \}$ be the set of possible events emitted by $\mathit{IMP}_\mathit{io}$ programs. Any program $P$ can be viewed as labeled transition system $\mathit{TS}_P \triangleq (\mathcal{S}, \mathcal{E}, \mathcal{I}_P, \to)$ where $\to$ is the small-step operational semantics of $\textsc{IMP}_\mathit{io}$, and $\mathcal{I}_P \triangleq \langle P, \_ \mapsto 0 \rangle$
  is the state initiating the execution of $P$ in a zero-initialized memory.
  In the following, we will write $P$ for $\mathit{TS}_P$.
  In this setting, we can immediately use our coinductive relation $\fe$ to verify hyperproperties of programs.
  Nonetheless, instead of reasoning with the high-level rules presented in the previous sections, we can exploit the syntactic structure of programs to obtain more convenient reasoning rules as well as dedicated proof tactics. Figure \ref{fig:rules} presents a selection of such reasoning rules. Many more can be derived.

  \subsubsection*{Proof Management}

  The first rule, \textsc{Init}, exploits the soundness of $\fe$ (\Cref{thm:fe}) to initialize a proof by parameterized coinduction.
  The second rule, \textsc{Memory-Invariant}, is derived from \textsc{Invariant} and allows to extend the current coinduction hypothesis with a relation that should hold for every pairs of memories from the point of application on.
  Importantly, contrary to the \textsc{Invariant} rule (see \Cref{fig:fe_core}), where the extension of the coinduction hypothesis refers to states of the underlying LTSs, \textsc{Memory-Invariant} is restricted to invariants that only refer to the memories of the two programs (ignoring the code of the programs themselves, and the current temporal relation $\psi$). This is not a hard restriction (we could still use the more general rule), but we adopt this version for convenience as it is relatively unusual to allow invariants to refer to the syntactic structure of a program or to a logical specification.
  A memory-invariant $\mathit{INV}$ is lifted to a relation on triples in a natural way. More concretely, we define \[
    \mathit{INV}@(P_1, P_2, \psi) \triangleq \{ \ (\langle P_1, m_1 \rangle, \langle P_2, m_2 \rangle, \psi) \mid (m_1, m_2) \in \mathit{INV} \ \}
  \]
  The rule $\textsc{Memory-Invariant}$ extends the current coinduction hypothesis with $\mathit{INV}@(P_1, P_2, \psi)$, allowing to finish a proof at a later point if we cycle back to a state where the same pair of programs has to be matched with the same temporal relation $\psi$.

  \subsubsection*{Handling I/O}

  To handle inputs and outputs, we introduce two specialized rules \textsc{Input-input} and \textsc{output-output}. These rules are exploiting the fact that if the next instruction to be executed in both programs is an I/O operation, the only way to make progress is to ensure that any event produced by the left-hand program can be matched with a corresponding event in the right-hand program. Once the appropriate event for the right-hand program has been chosen, the new goal is guarded by a \texttt{next} operator, forcing us to compute the derivative of the current property $\psi$ and prove that it is not trivially violated.

  \subsubsection*{Handling Nondeterminism}
  
  Nondeterministic assignments are covered by the rules \textsc{Havoc-L} and \textsc{Havoc-R}.
  When the next instruction to execute in the left-hand program is a nondeterministic assignment of $x$, \textsc{Havoc-L} requires us to consider an arbitrary new value $v$ for $x$ (i.e., the new goal is guarded by a universal quantification over $v$).
  Dually, the rule $\textsc{Havoc-R}$ handles nondeterministic assignments in the right-hand program by requiring us to \textit{choose} a new value $v$ for $x$ (i.e., the new goal is guarded by an existential quantification over $v$).
  The rules \textsc{Havoc-R} and \textsc{Havoc-L} are derived from the more general \textsc{Steps-L} and \textsc{Steps-R} and from the operational semantics of $\textsc{IMP}_\mathit{io}$.

  \subsubsection*{Handling Loops}

  To handle loops, one could simply use the rules \textsc{Steps-L} and \textsc{Steps-R} to unfold a loop either in the right-hand or the left-hand program. However, we almost always wish to establish an invariant when a loop is encountered.
  The rules \textsc{Loop-L} and \textsc{Loop-R} combine an application of \textsc{Steps-L} (or \textsc{Steps-R}) with an application of \textsc{Memory-Invariant} to simultaneously unfold loops and establish a new memory-invariant.

  \subsubsection*{Handling Derivatives}
  
  We note that after applying I/O rules, the new goals are always of the form $\quadruple{\fbox{$H$}}{\ldots \triangleright \ldots }{\ldots \triangleright \ldots}{\psi}$, forcing us to derive the current relation $\psi$ to check that the emitted events are not immediately violating it.
  To do so, we use the rules for derivatives introduced in \Cref{fig:deriv-hyco}.

  \begin{figure}
      \begin{mathpar}
        \\
        \textbf{Proof Management}\\
        \inferrule[Init]{\psi \ \textrm{is a safety relation} \\ \quadruple{\fbox{$\emptyset$}}{\mathcal{I}_{P_1}}{\mathcal{I}_{P_2}}{\psi}}{\models \forall^{P_1}\exists^{P_2}\psi}\\
        \inferrule[Memory-Invariant]{(m_1, m_2) \in \mathit{INV}\\\forall (m_1, m_2) \in \mathit{INV}, \hyco{P_1}{m_1}{P_2}{m_2}{\psi}{H \cup \mathit{INV}@(P_1, P_2, \psi)}}{\hyco{P_1}{m_1}{P_2}{m_2}{\psi}{H}}\\
        \\
        \textbf{I/O}\\
        \inferrule[Input-input]{\forall v_1, \exists v_2, \uhyco{P_1}{m_1[x_1 \mapsto v_1]}{P_2}{m_2[x_2 \mapsto v_2]}{\psi}{H}{\texttt{in}(v_1)}{\texttt{in}(v_2)}}{\hyco{\textbf{input} \ x_1 \ ; \ P_1}{m_1}{\textbf{input} \ x_2 \ ; \ P_2}{m_2}{\psi}{H}}\\
        \inferrule[Output-output]{\uhyco{P_1}{m_1}{P_2}{m_2}{\psi}{H}{\texttt{out}(\llbracket e_1 \rrbracket_{m_1})}{\texttt{out}(\llbracket e_2 \rrbracket_{m_2})}}{\hyco{\textbf{output} \ e_1 \ ; \ P_1}{m_1}{\textbf{output} \ e_2 \ ; \ P_2}{m_2}{\psi}{H}}\\
        \\
        \textbf{Nondeterminism}\\
        \inferrule[Havoc-L]{\forall v, \hyco{P_1}{m_1[x \mapsto v]}{P_2}{m_2}{\psi}{H}}{\hyco{\textbf{havoc} \ x \ ; \ P_1}{m_1}{P_2}{m_2}{\psi}{H}}\quad
        \inferrule[Havoc-R]{\exists v, \hyco{P_1}{m_1}{P_2}{m_2[x \mapsto v]}{\psi}{H}}{\hyco{P_1}{m_1}{\textbf{havoc} \ x \ ; \ P_2}{m_2}{\psi}{H}}\\
        \\
        \textbf{Loops}\\
        \inferrule[Loop-L]{(m_1, m_2) \in \mathit{INV} \\\\ \forall (m_1, m_2) \in \mathit{INV}, \hyco{P_1 \ ; \ \textbf{loop} \ P_1}{m_1}{P_2}{m_2}{\psi}{H \cup \mathit{INV}@(\textbf{loop} \ P_1, P_2, \psi)}}{\hyco{\textbf{loop} \ P_1}{m_1}{P_2}{m_2}{\psi}{H}}\\
        \inferrule[Loop-R]{(m_1, m_2) \in \mathit{INV} \\\\ \forall (m_1, m_2) \in \mathit{INV}, \hyco{P_1 \ ; \ \textbf{loop} \ P_1}{m_1}{P_2}{m_2}{\psi}{H \cup \mathit{INV}@(P_1, \textbf{loop} \ P_2, \psi)}}{\hyco{\textbf{loop} \ P_1}{m_1}{P_2}{m_2}{\psi}{H}}
        \\
      \end{mathpar}
      \caption{Selection of proof rules for $\textsc{IMP}_\mathit{io}$}
      \label{fig:rules}
      \Description{Selection of proof rules for}
  \end{figure}

  \newpage
  
  \subsection{Examples in Coq}

  \begin{example}
  
    Consider the following example of a simple echo server:

    \begin{center}
      \texttt{echo}:
      \begin{tabular}{l}
        \textbf{loop}\\
        \quad \textbf{input} \ $x$\\
        \quad \textbf{output} \ $x$\\
      \end{tabular}
    \end{center}

    We use HyCo to prove that for all executions of \texttt{echo}, there exists another execution such that the outputs of the second execution are always exactly the double of the outputs in the first execution. The inputs are never compared. Formally, we prove $\forall^\texttt{echo}\exists^\texttt{echo}\LTLglobally\mathit{double}$ where $\mathit{double} \triangleq \{ \ (e_1, e_2) \mid e_1 = \texttt{out}(x) \implies e_2 = \texttt{out}(2 * x) \ \}$.
    The proof script (slightly simplified for readability) corresponding to the proof of this example is the following: 
  \begin{center}
    \begin{BVerbatim}
Proof.
  hyco_init. hyco_sync. hyco_step.
  intros v1. exists (2 * v1).
  hyco_deriv. hyco_step.
    ...
  hyco_cycle.
Qed.
    \end{BVerbatim}
  \end{center}

    In this script, the tactic \verb|hyco_init| exploits the rule \textsc{Init} to initialize a proof by coinduction. It also implicitly uses the rule \textsc{Invariant} to add the initial states to the current coinduction hypothesis.
    The tactic \verb|hyco_sync| then steps through the execution of the two programs until both reach an I/O instruction. In our case, the programs are executed until their first \textbf{input} instruction.
    The tactic \verb|hyco_step| determines which I/O rule should be applied to make further progress in the coinductive proof. Gere, \textsc{Input-input} is selected and it remains to match any input $v_1$ with some input $v_2$. We pick $v_2 = 2 * v_1$.
    After choosing an appropriate input, we need to check that this choice is compatible with the current trace relation by computing its derivative.
    The tactic \verb|hyco_deriv| selects the appropriate derivation rule (see \Cref{fig:deriv-hyco}).
    Here, the rule \textsc{Deriv-$\LTLsquare$} is applied and requires us to prove that the two outputs $\texttt{out}(v_1)$ and $\texttt{out}(2 * v_1)$ are satisfying the event-invariant $\mathit{double}$. This fact is trivially proven by unfolding the definition of $\mathit{double}$ and reading the current content of the memories.
    Finally, the tactic \verb|hyco_cycle| terminates the proof by applying the coinduction hypothesis.
    Indeed, after completing one iteration of the loop, the two programs remaining to be executed are again two copies of \texttt{echo}, and the property we need to verify is still $\LTLsquare\mathit{double}$; we cycled back to the initial state of the proof.
    We note that the repeated use of the tactics \verb|hyco_sync| and \verb|hyco_step| could be partially automated, allowing us to focus only on the \textit{interesting} part of the proof: picking the appropriate input for the right-hand program.
  \end{example}

  \begin{example}
  We describe a second example that requires us to use the rule \textsc{Invariant} as well as the alignment rules discussed in \Cref{sec:align}. We consider the two following programs \verb|incr| and \verb|ndet_add|: \begin{center}
    \verb|incr|:
    \begin{tabular}{l}
      $x$ \textbf{:=} 0\\
      \textbf{loop}\\
      \quad $x$ \textbf{:=} $x$ + 1\\
      \quad \textbf{output} \ $x$\\
    \end{tabular}
    \qquad
    \verb|ndet_add|:
    \begin{tabular}{l}
      $x$ \textbf{:=} 0\\
      \textbf{loop}\\
      \quad \textbf{havoc} \ $y$\\
      \quad $x$ \textbf{:=} $x + y$\\
      \quad \textbf{output} \ $x$\\
    \end{tabular}
  \end{center}

  As for the previous example, we wish to verify $\forall^\texttt{incr}\exists^\texttt{ndet\_add}\LTLglobally\mathit{double}$. The corresponding proof script is presented below.

  \begin{center}
    \begin{BVerbatim}
Proof.
  hyco_init.
  hyco_left 2. hyco_right 2.
  hyco_invariant (fun m1 m2 => 2 * m1 "x" = m2 "x").
    ...
  hyco_left 3. hyco_right 3.
  apply (hyco_havoc_r 2). hyco_right 2.
  hyco_step.
  ... rewrite <- INV. ...
  hyco_cycle.
  ... rewrite <- INV. ...
Qed.
    \end{BVerbatim}
  \end{center}

  The proof starts by executing both programs until the beginning of the loop is reached. To do so, we use the tactics \verb|hyco_left| and \verb|hyco_right|.
  These tactics exploit the rules \textsc{Steps-L} and \textsc{Steps-R} (see section \ref{sec:align}) to perform $n$ silent computation steps in the left-hand or the right-hand program.
  Once the beginning of both loops is reached, we use the tactic \verb|hyco_invariant| to establish the memory-invariant $x_2 = 2 * x_1$. After proving that the invariant is satisfied initially, we again step through the programs and we resolve the nondeterministic assignment $\textbf{havoc} \ y$ from the right-hand program by choosing $y = 2$. Note that at this point of the proof, the left-hand program is ready to emit an output while the right-hand program still needs to update its variable $x$.
  We use the tactic \verb|hyco_right| to align both programs, and the tactic \verb|hyco_step| to match the two outputs.
  We note that in order to be able to prove that the event-invariant $\LTLsquare\mathit{double}$ is preserved after emitting the outputs, we critically need to use the memory-invariant (this corresponds to the instruction \verb|rewrite <- INV| in the script). Once we proved that the event-invariant is maintained, we can conclude the proof with an application of \verb|hyco_cycle|.

  \end{example}

  \section{Related Work and Discussion}
  
  \subsection{Game-Based Verification of Hyperliveness}

  Our approach is connected to a game-theoretic interpretation of hyperproperties introduced by Coenen et al. \cite{DBLP:conf/cav/CoenenFST19}. In their approach, the task of verifying a $\forall\exists$ property is viewed as a game between a \textit{universal player}, and an \textit{existential player}.
  The moves of the universal player correspond to transitions in the left-hand system, and the existential player is forced to answer with corresponding transitions in the right-hand system. The existential player wins the game if there is a \textit{strategy} to answer any move of the universal player without immediately violating the specification.
  In that case, the targeted hyperproperty holds.
  In the coalgebraic reading we presented in this paper, the greatest fixed point $\nu . \fef$ can be interpreted as the \textit{winning region} of the game: the set of game states from which the existential player is guaranteed to win.

  Originally, the game-based approach was introduced with the goal of automating the verification of $\forall\exists$ temporal hyperproperties. The key insight is that when the systems being verified are finite-state, the game arena can be effectively constructed and sent to an efficient game solver. In the case of infinite-state systems described by programs, the arena cannot be constructed explicitly and we need to resort to finite approximations in order to automatically solve the game.
  For example, Beutner and Finkbeiner~\cite{DBLP:conf/cav/BeutnerF22} used predicate abstraction to construct a finite approximation of the game.

  An important advantage of the coalgebraic approach over the game-based approach is that it does not require us to explicitly construct an approximation of the the game.
  Instead, we implicitly define the \textit{exact} wining region as a coinductive relation. This allows us to leverage parameterized coinduction as a deductive system to prove that the existential player has a winning strategy.
  Furthermore, by embedding our coalgebraic approach in a proof assistant such as Coq, we immediately benefit from its rich logic and its ecosystem of mathematical libraries to reason about the game.

  \subsection{Program Logics for Hyperproperties}

  There is a long tradition of verifying relational properties of sequential programs using extensions of Hoare logics (cf.~\cite{10.1007/978-3-030-61470-6_7}). \emph{Relational Hoare Logic} (RHL) was originally introduced by Benton~\cite{Benton04} and later extended to higher-order programs~\cite{AguirreBGGS19}, separation logic~\cite{Yang07}, probabilistic computations~\cite{BartheGB09}, and quantum computations~\cite{Unruh19,BartheHYYZ20}. D'Osualdo et al. introduced \emph{hyper-triples}~\cite{10.1145/3563298} as a unifying compositional building block for proofs of $k$-hypersafety properties. \emph{Cartesian hoare logic}~\cite{10.1145/2980983.2908092} is a sound and relatively complete calculus for $k$-safety properties.

  Most relational logics are restricted to properties that express a condition over a given set of programs or a $k$-fold self-composition of some program for some fixed $k$. Some extensions that go beyond $k$-safety are directed at specific properties such as differential privacy~\cite{BartheKOB13} and sensitivity~\cite{BartheEGHS18}. There are also several extensions that provide more general support for $\forall\exists$ hyperproperties. Dardinier and M\"uller~\cite{10.1145/3656437} introduced \emph{Hyper Hoare Logic}, a generalization of Hoare logic that lifts assertions to properties of arbitrary sets of states. Hyper Hoare Logic can reason about both the absence and the existence of combinations of executions. Other extensions to $\forall\exists$ hyperproperties include \emph{Forall-Exist Hoare Tuples} (FEHT)~\cite{DBLP:conf/tacas/Beutner24}, \emph{refinement quadruples}~\cite{10.1007/978-3-642-35722-0_3} and \emph{RHLE triples}~\cite{10.1007/978-3-031-21037-2_4}.
  Some of these approaches have been mechanized in a proof assistant (cf.~\cite{10.1145/3656437}). However, all these approaches are limited to the analysis of pre-post style relational specifications. Unlike our approach, they are thus are not well-suited to reason about reactive systems whose executions are inherently infinite. By contrast, the coalgebraic approach goes beyond pre-post style specifications to reason about infinite sequences of events.

  \subsection{Mechanized Frameworks for Simulation Proofs}

  Interaction Trees (ITrees)~\cite{itrees} have recently been introduced as a general-purpose coinductive structure to model potentially non-terminating computations within the Coq proof assistant.
  Several approaches have been developed to establish trace inclusion (resp. equivalence) between ITrees using simulation (resp. bisimulation) techniques \cite{gpaco, stuttering_for_free}.
  Similar to the framework presented in this paper, these approaches are based on variants of parameterized coinduction.
  However, they do not support temporal reasoning. An interesting and natural direction would be to apply our approach to the verification of temporal hyperproperties of programs modeled as ITrees.

  Another Coq-based framework for simulation proofs is \textsc{Simuliris} \cite{simuliris}.
  It combines \textsc{Iris} \cite{iris}, a powerful concurrent separation logic, with simulation techniques.
  While \textsc{Simuliris} does not target temporal hyperproperties, integrating our framework within a separation logic in a similar way is an interesting research direction. It would enable intuitive reasoning about temporal hyperproperties of heap-manipulating programs.


  \section{Conclusion}

We have presented \textsc{HyCo}, a mechanized framework for the verification of
temporal hyperproperties within the Coq proof assistant.
Our approach provides a foundation for the construction of trustworthy proofs of temporal hyperproperties in complex reactive systems.
Since we use the full logic of Coq as the underlying assertion language, the
expressiveness of the hyperproperties considered here is far beyond the scope of currently available automated approaches.
In particular, \textsc{HyCo} can easily be applied to new programming languages and new temporal logics.

In future work, we plan to build on our framework to design a fully-featured mechanized program logic for temporal hyperproperties of reactive systems. Another important direction is to investigate the completeness of the approach. It is well-known that the game-theoretic approach is incomplete in general~\cite{DBLP:conf/cav/CoenenFST19}. For finite-state systems, the problem has been mitigated by adding \textit{prophecy variables} that inform the existential player about future choices of the universal player~\cite{9919658}. Our proof system would likely benefit from the introduction of prophecy variables in a similar manner.

  \section*{Data Availability}

  The Coq development accompanying this paper is available on GitHub at the following address: \begin{center}
    \href{https://github.com/acorrenson/hyco-popl-2025}{https://github.com/acorrenson/hyco-popl-2025}
  \end{center}

  Additionally, an archive containing the development version at the time this paper was submitted, together with detailed installation and evaluation instructions, can be found on Zenodo at the following address \cite{correnson_2024_14055009}: \begin{center}
    \href{https://zenodo.org/records/14055009}{https://zenodo.org/records/14055009}
  \end{center}

  \section*{Acknowledgements}

  This work was supported by the European Research Council (ERC) Grant HYPER (No. 101055412). Views and opinions expressed are however those of the authors only and do not necessarily reflect those of the European Union or the European Research Council Executive Agency. Neither the European Union nor the granting authority can be held responsible for them. A. Correnson carried out this work as a member of the Saarbr\"ucken Graduate School of Computer Science.

  \bibliographystyle{ACM-Reference-Format}
  \bibliography{references,hyper}
\end{document}